%% file: main.tex
\documentclass{article} %
\usepackage{iclr2025_conference}

\usepackage{hyperref}
\usepackage{url}
\usepackage{xspace}
\usepackage{amssymb}
\usepackage[table]{xcolor}
\usepackage{algorithm}
\usepackage{algpseudocode}
\usepackage{multirow}
\usepackage{multicol}
\usepackage{float}
\usepackage{booktabs}
\usepackage{amsmath,amsthm} 
\usepackage{accessibility}
\newtheorem{theorem}{Theorem}
\newtheorem{lemma}{Lemma}
\newcommand{\best}[1]{\cellcolor{dreamblue!15}#1}     
\newcommand{\second}[1]{\textcolor{dreamblue}{#1}}    
\newtheorem{proposition}{Proposition}

\hypersetup{
  colorlinks=true,
  linkcolor=dreamblue,
  citecolor=dreamblue,
  urlcolor=dreamblue,
  filecolor=dreamblue,
}

\renewcommand{\dreamxrunhead}{DreamX}                 %
\renewcommand{\dreamxvenue}{DreamX}  %
\renewcommand{\dreamxdate}{August 2026}                     %

\newcommand{\modelname}{\textbf{IntHQ}\xspace}

\title{IntHQ: Task-Interactive Hierarchical Query on Dual-Stream Representations for Generative Recommendation}

\author{%
Junjie Sun, Longfei Xu\thanks{Corresponding author.}, Huimin Yan, Wei Luo, Kaikui Liu, Xiangxiang Chu
\AND
{\normalfont DreamX, Alibaba Group}%
}

\iclrfinalcopy %

\begin{document}

\maketitle

\input{tex/abstract}

\input{tex/intro}

\input{tex/pre2}

\input{tex/Methodology}

\input{tex/exp}

\input{tex/related_work}

\input{tex/conclusion}

\bibliography{iclr2025_conference}
\bibliographystyle{iclr2025_conference}

\appendix

\input{tex/appendix}

\end{document}

%% file: tex/abstract.tex
\begin{abstract}

Multi-task learning over heterogeneous data is fundamental to modern recommendation, while generative models are emerging as the backbone of next-generation recommenders. However, the integration of multi-task learning into the generative paradigm remains largely unexplored.

Existing multi-task recommenders, in both discriminative and generative paradigms, extract task-relevant features from a single task-agnostic representation and wire tasks into a predefined conversion funnel. We show that this scheme is inherently prone to a threefold collapse. \textbf{Source collapse}, where task-specific signals are injected late and diluted in the shared latent space. \textbf{Relational collapse}, where task dependencies are either implicitly absorbed by the backbone or statically fixed by predefined funnels. \textbf{Hierarchical collapse}, where tasks depend on features at different scales and shift across training stages.

We propose \modelname, a multi-task generative recommender with three components, each alleviating one collapse. \textbf{Dual-Stream Decoupling (DSD)} injects task identity into computation stream early and separates the shared context stream from the task-specific stream, alleviating signal dilution. \textbf{Task-Interactive Modeling (TIM)} replaces the predefined funnel with explicit cross-task interaction, letting each task condition on the realized outcomes of its predecessors with learned, input-adaptive strength.  \textbf{Hierarchical Querying (HQ)} lets each task gather multi-scale information across different layers at different training stages.

In offline evaluations, \modelname consistently outperforms competitive encoder backbones under four representative task-head configurations. Deployed in production on Amap, serving hundreds of millions of users for travel recommendation, \modelname yields a 1.60\% relative UVCTR lift.

\end{abstract}

%% file: tex/intro.tex
\section{Introduction}
Modern industrial recommender systems increasingly rely on multi-task learning (MTL) to jointly optimize multiple business objectives~\citep{caruana1997multitask,wang2023multi,zhang2025advances}. By sharing information across tasks, MTL reduces serving cost and improves robustness under bias, sparsity, and cold-start settings~\citep{zhang2025advances}. Existing work mainly follows three directions. The first focuses on allocating capacity across tasks, evolving from shared-bottom~\citep{caruana1997multitask} to MMoE~\citep{ma2018modeling} and PLE~\citep{tang2020progressive} to mitigate negative transfer. The second investigates the relation among tasks, such as ESMM~\citep{ma2018entire} and AITM~\citep{xi2021modeling}, which leverage dependencies among user actions to reduce sparsity and selection bias. The third operates at the optimization level. GradNorm~\citep{chen2018gradnorm} rebalances per-task loss scales, PCGrad~\citep{yu2020gradient} projects away the conflicting component between task gradients. CAGrad~\citep{liu2021conflict} and Nash-MTL~\citep{navon2022multi} solve for an update direction that every task accepts. These methods~\citep{tang2020progressive,sheng2021one,wang2025home,ma2018modeling,ma2018entire} are developed for discriminative, feature-engineering-heavy ranking stages within a multi-stage cascade, and do not transfer naturally to a unified sequence formulation.

\begin{figure}[h]
  \centering
 \includegraphics[width=\linewidth]{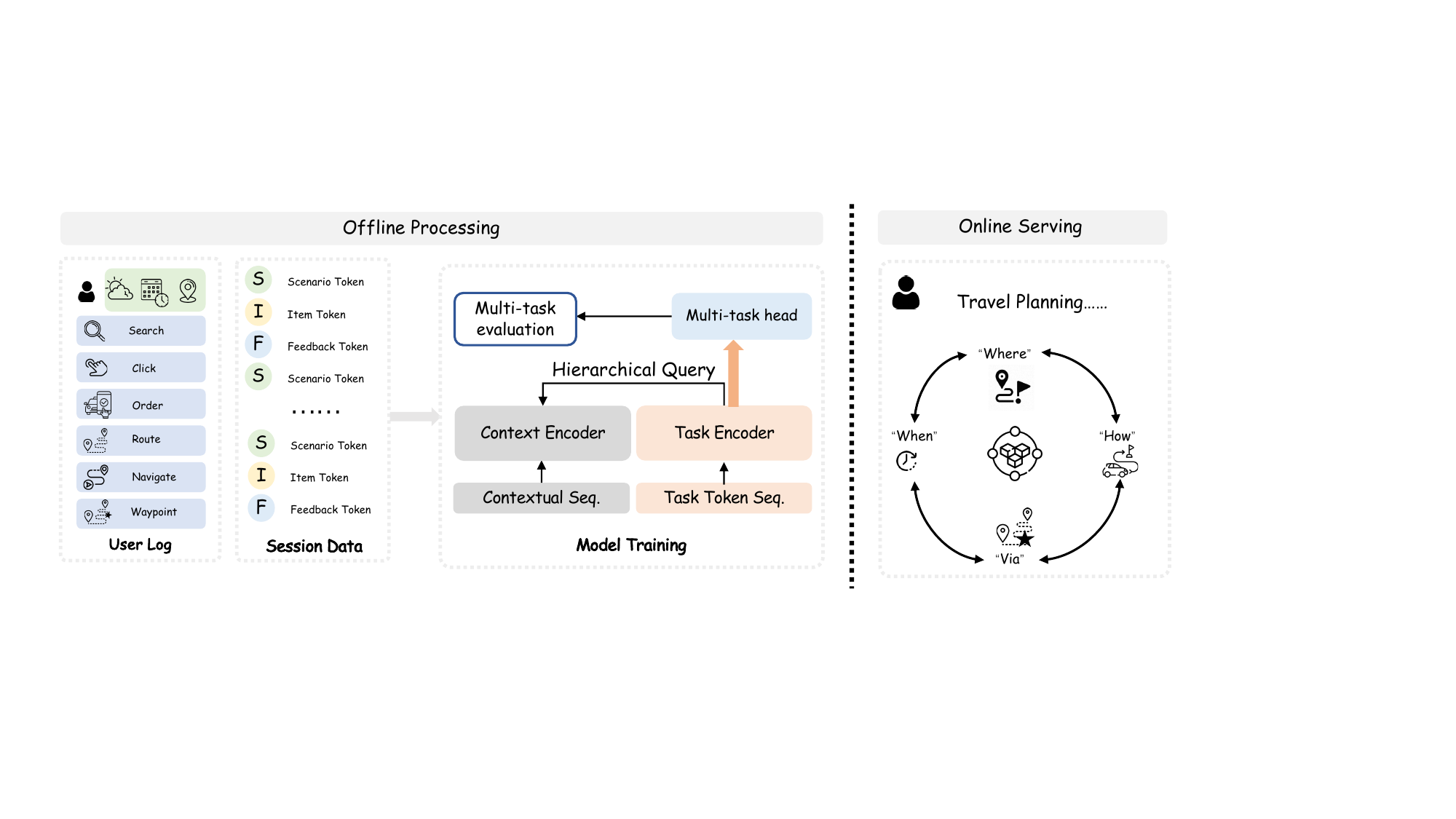}
    \caption{Overview of \modelname. Left: offline processing turns user logs into session-structured data with scenario/item/feedback (SIF) tokens. The context stream and the task-token stream are decoupled. Task tokens interact within task encoder and hierarchically query the context encoder. Right: online serving jointly produces the \emph{when}, \emph{where}, \emph{how}, and \emph{via} decisions for travel recommendation, varies with time and scenarios.}
  \label{fig:intro_ovewview}
\end{figure}

Inspired by Large Language Models (LLMs), the community has recently moved toward Generative Recommendation (GR)~\citep{hou2026survey,zhai2024actions,deng2025onerec,rajput2023recommender,zhang2026onetrans}. Tailored for high-cardinality, non-stationary streaming data~\citep{zhai2024actions}, GR leverages Transformer-based sequence models to reformulate recommendation as a direct sequence generation task. By unifying historical behaviors, feedback, context, and candidates into a single tokenized sequence, GR collapses the fragmented cascade into one backbone and enables end-to-end model scaling~\citep{zhai2024actions,yan2026inttravel,yan2025intsr,shi2024unisar,shi2026mining}.

Although GR is becoming the backbone of a new generation of recommender systems, how to meet the multi-objective demands inherent to industrial settings remains an open problem. GR is trained under a single-objective, next-item prediction paradigm ~\citep{rajput2023recommender,zhai2024actions,deng2025onerec} and does not natively support the joint optimization of multiple heterogeneous objectives. To bridge this mismatch, existing multi-task generative recommenders fuse task modeling into the shared sequence representation, either conditioning tasks at output-side heads~\citep{yan2026inttravel} or injecting task tokens that are encoded jointly with behavior tokens~\citep{tang2026onerank,liang2026hgenpush}. Either way, task signals become entangled with the shared representation. Multi-task learning with the generative paradigm still faces three collapses.

\begin{itemize}
\item \textbf{Source collapse.} Task-specific signals are injected late and diluted in the shared space. Context tokens and task tokens are functionally heterogeneous. Encoded as a continuous stream, context tokens must characterize user behavior across diverse scenarios, yielding a high-dimensional representation coupled with user intention. Task tokens, in contrast, are grounded on task-specific semantic manifolds~\citep{bengio2013representation,vafidis2025disentangling}. Task tokens require distinct, task-oriented emphases.

\item \textbf{Relational collapse.} Task dependencies are either implicitly absorbed by the backbone or statically fixed by predefined funnels~\citep{ma2018entire,xi2021modeling}. Interactions entangled inside the shared stream can only be curbed by cross-task gradient isolation that discards useful cross-task signal~\citep{tang2026onerank}. Yet some multi-task dependencies are complex and cannot be specified a priori~\citep{yan2026inttravel}. 

\item \textbf{Hierarchical collapse.} Different layers of a deep backbone encode context at different granularities. From fine-grained local patterns to abstract long-range intent, heterogeneous tasks rely on different depths at different training stages~\citep{sun2020adashare}. Taking only the last layer or a fixed per-layer scalar gate~\citep{yan2026inttravel} flattens this hierarchy into one representation shared by all tasks.
\end{itemize}

The three collapses above take a concrete form in map and travel services. As a Location-Based Service (LBS) provider, we resolve four coupled decisions within a single session, namely a user's departure time, destination, travel mode, and waypoint, which we refer to as \emph{when}, \emph{where}, \emph{how}, and \emph{via}. Figure~\ref{fig:intro_ovewview} illustrates the deployed service.

Within a single session, the four tasks draw on different evidence. \emph{Where} and \emph{when} are settled by a coarse periodicity of the long-term history. \emph{Via} turns on one transient need inside the current session, such as refueling. \emph{How} responds to the real-time condition of the moment. A single task-agnostic embedding keeps the dominant statistic and drops the minority evidence, therefore creates a bottleneck. Nothing placed downstream can recover the discarded information. \textbf{Task signals therefore have to be injected early.} Meanwhile, the context computation compress sequences into what persists, whereas task computation is discriminative in current situation. The two roles have different optima for the same parameters. A shared block can only settle between them. \textbf{The task and context computation therefore have to live in a decoupled parameter space.} Moreover, the tasks induced by user behaviors are tightly coupled rather than independent across scenarios, as shown in bottom right of Figure~\ref{fig:intro_ovewview}. The destination in commuting scenario shapes travel time and mode, but an uncertain destination on a trip loosens this dependence, so no fixed order holds across scenarios. \textbf{ Dependencies that shift with real-world constraints call for an adaptive task-interaction framework.} Meanwhile, \textbf{the tasks differ substantially in vocabulary size and in the sparsity of informative behavioral signals}, so it is beneficial to route representations at different scales adaptively. 

Motivated by these observations, we propose Task-\textbf{Int}eractive \textbf{H}ierarchical \textbf{Q}uery (\modelname). The main contributions of this work are summarized as follows:

\begin{itemize}
    \item \textbf{Novel Multi-task Generative Recommendation Paradigm.} We identify three collapses in multi-task generative recommendations and propose \modelname with three components that address the three collapses respectively. \textbf{Dual-Stream Decoupling.} We attach four constructed task tokens at each position of the behavior sequence to form an independent task-token sequence, with separated parameters decoupled from the context computation stream. \textbf{Task-Interactive Modeling.} Within the task-token stream, we introduce an attention-based, explicit, and adaptive interaction among tasks. \textbf{Hierarchical Querying.} Task tokens act as queries while the context-stream representations serve as keys and values, aggregating informative signals across different layers.

    \item \textbf{Comprehensive Experimental Validation.} We conduct extensive experiments on a real-world industrial dataset IntTravel~\citep{yan2026inttravel}. \modelname provides highly informative task signals that consistently benefit all task heads, and outperforms the latest baselines. Ablation study also validates the contribution of each component.

    \item \textbf{Deployment and evaluation.} We deploy \modelname in production at Amap. \modelname currently serves hundreds of millions of users at tens of thousands of Query Per Second (QPS) over tens of millions of POIs, and powers key products including app launch map, terminal recommendation, travel-mode recommendation, and along-the-way recommendation. Extensive online A/B tests show consistent gains, delivering a 1.60\% relative UVCTR (CTR on unique user views) lift while keeping average inference latency within 40 milliseconds.
\end{itemize}

%% file: tex/pre2.tex
\section{Preliminaries}
In this section, we formalize multi-task learning in travel recommendation. Building on this formulation, we theoretically and empirically expose a threefold collapse inherent in conventional architectures, which inspires the design principles of our approach.

\subsection{Problem Formulation}
Given all user interactions set $\mathcal{A}$, consists of the context sequence summarizing the user profile, spatio-temporal state, and history, the system jointly resolves a travel intention that decomposes into $K$ coupled sub-decisions indexed by $k\in\mathcal{K}$. We instantiate $\mathcal{K}=\{$\emph{when}, \emph{where}, \emph{how}, \emph{via}$\}$. The ground-truth labels are defined as $\mathbf{y}=[\,y_k \mid k\in\mathcal{K}\,]$, each $y_k$ valued in a candidate set $\mathcal{I}_k$ (possibly huge, e.g.\ POIs). Training pairs in dataset $\mathcal{D}$ are $(\mathcal{A},\mathbf{y})\sim\mathcal{D}$.

User behavior is logged by session. Interaction set $\mathcal{A}$ is an ordered sequence assembled
from three token types. 
\begin{equation}
\label{eq:session}
\mathcal{A}_u=\big(\,(S_t,\,I_t,\,F_t)\,\big)_{t=1}^{T_u},
\qquad S_t\in\mathcal{S},\; I_t\in\mathcal{I},\; F_t\in\mathcal{F},
\end{equation}

where each step $t$ contributes three context token types. Scenario token $S_t$ encodes spatio-temporal state of the $t$-th action. Item token $I_t$ denotes the POI interacted with at step $t$ and its attributes. Feedback token $F_t$ signals the user’s interaction type, such as clicking or searching. In addition, a set of user profile tokens $\mathcal{U}_u=(U_1,\dots,U_m)$, encoding long-term static attributes (e.g., demographics), is prepended to the sequence, i.e., $\mathcal{A}'_u=\big(\,\mathcal{U}_u,\,(S_t, I_t, F_t)_{t=1}^{T_u}\big) \in \mathcal{A}$, serving as a global context visible to all subsequent tokens. To decouple task tokens $q^{t}_{k} \in \mathcal{Q}$ and context tokens, we define task token sequence and attach labels to related tasks as:

\begin{equation}
\label{eq:task-stream}
\mathcal{Q}_u=\big(\,(q^{\,t}_{\text{when}},\,q^{\,t}_{\text{where}},\,q^{\,t}_{\text{how}},\,q^{\,t}_{\text{via}})\,\big)_{t=1}^{T_u},
\qquad q^{\,t}_{k}\in\mathcal{Q}_{k},\;\; k\in\mathcal{K},
\end{equation}

A model with parameters $\Theta$ encodes the request into a representation $z=\mathrm{Enc}_{\Theta}(\mathcal{A})$ and is trained by maximum likelihood of the joint target.
\begin{equation}
\label{eq:mle}
\hat{\Theta}=\arg\max_{\Theta}\sum_{(\mathcal{A},\mathbf{y})\in\mathcal{D}}
\log P\big([\,y_k\mid k\in\mathcal{K}\,]\mid \mathcal{A};\Theta\big),
\end{equation}

\subsection{Pilot Study}
\label{sec:pilot}

We score a model by the joint log loss of the outcome tuple, whose Bayes optimum is $R^\star=H(\mathbf y\mid\mathcal A)$, where $H$ denotes conditional entropy. Conventional multi-task
recommenders, in both the discriminative and the generative paradigm, share one pattern. They compress $\mathcal A$ into a single task-agnostic code $z=\mathrm{Enc}(\mathcal A)$ and score the tuple with heads that are conditionally independent given $z$, so the predicted law is $\prod_k q_k(y_k\mid z)$. We call this \emph{Task-Agnostic Encoding} (TAE), and its optimal risk is $R_{\mathrm{TAE}}(z)=\sum_k H(y_k\mid z)$. Theorem~\ref{thm:unified} splits the excess risk of TAE into two non-negative terms, which we name the \textbf{source} and the \textbf{relation} collapse. All proofs and analysis on \textbf{source} and \textbf{relation} collapse are in Appendix~\ref{app:proofs}. The \textbf{hierarchy} collapse is established empirically in \S\ref{sec:pilot-hier}.

\begin{theorem}[Collapse decomposition]
\label{thm:unified}
Assume every $\mathcal I_k$ is finite, $z=\mathrm{Enc}(\mathcal A)$ is a measurable deterministic map, and the heads range over all conditional laws with no parameter shared across them. Then
\begin{equation}
\label{eq:unified}
R_{\mathrm{TAE}}(z)-R^\star
=\underbrace{\sum_k I(\mathcal A;y_k\mid z)}_{\Delta_{\mathrm{src}}(z)}
+\underbrace{\mathrm{TC}(\mathbf y\mid\mathcal A)}_{\Delta_{\mathrm{rel}}},
\end{equation}
where $I(\mathcal A;y_k\mid z)=H(y_k\mid z)-H(y_k\mid\mathcal A)$ is the information about $y_k$ that the encoding discards, namely total
correlation ~\citep{watanabe1960information,ver2014discovering} $\mathrm{TC}(\mathbf y\mid\mathcal A)=\sum_k H(y_k\mid\mathcal A)-H(\mathbf y\mid\mathcal A)$.
Both terms are non-negative, so $R_{\mathrm{TAE}}(z)\ge R^\star+\Delta_{\mathrm{rel}}$ for every encoder, with equality if and only if the encoding $z$ marginally sufficient for every task.
\end{theorem}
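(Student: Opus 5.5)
The plan is to prove the identity by a telescoping decomposition and then read off non-negativity and the equality case. Two preliminary facts come first.

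\textbf{Optimal TAE risk.} Under log loss, the optimal risk of the product law $\prod_k q_k(y_k\mid z)$ equals $\sum_k H(y_k\mid z)$. Since the heads share no parameters and range over all conditional laws, the expected joint log loss $\mathbb E\big[-\sum_k\log q_k(y_k\mid z)\big]$ separates into $K$ independent problems. Each is minimized by $q_k=P(y_k\mid z)$ by Gibbs' inequality, i.e.\ non-negativity of the conditional KL divergence. Finiteness of $\mathcal I_k$ guarantees that all entropies are finite and the minimizers exist. The same argument applied to the unrestricted joint law gives $R^\star=H(\mathbf y\mid\mathcal A)$, as stated.

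\textbf{The identity.} I add and subtract $\sum_k H(y_k\mid\mathcal A)$:
\[
R_{\mathrm{TAE}}(z)-R^\star=\sum_k\big[H(y_k\mid z)-H(y_k\mid\mathcal A)\big]+\Big[\sum_k H(y_k\mid\mathcal A)-H(\mathbf y\mid\mathcal A)\Big].
\]
The second bracket is $\mathrm{TC}(\mathbf y\mid\mathcal A)$ by definition. For the first bracket, the key step is that $z$ is a measurable deterministic function of $\mathcal A$. Hence $H(y_k\mid\mathcal A)=H(y_k\mid\mathcal A,z)$, and so $H(y_k\mid z)-H(y_k\mid\mathcal A)=I(\mathcal A;y_k\mid z)$ by the chain rule. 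This matches the expression given in the statement.

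\textbf{Non-negativity.} Conditional mutual information is non-negative. Conditional total correlation is also non-negative: writing $H(\mathbf y\mid\mathcal A)=\sum_k H(y_k\mid y_{<k},\mathcal A)$ and using that conditioning reduces entropy, it is a sum of terms $I(y_k;y_{<k}\mid\mathcal A)\ge 0$. This gives $R_{\mathrm{TAE}}(z)\ge R^\star+\Delta_{\mathrm{rel}}$ for every encoder, since $\Delta_{\mathrm{rel}}$ does not depend on $z$.

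\textbf{Equality case.} Equality holds if and only if $\Delta_{\mathrm{src}}(z)=0$. Because the summands are non-negative, this means $I(\mathcal A;y_k\mid z)=0$ for each $k$. Equivalently, $y_k$ is independent of $\mathcal A$ given $z$, which is precisely marginal sufficiency of $z$ for each task.

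\textbf{Main obstacle.} The only delicate point is measure-theoretic: $\mathcal A$ may be continuous or high-dimensional, so conditional entropies given $\mathcal A$ must be understood as expectations of the discrete entropies of the regular conditional laws $P(y_k\mid\mathcal A)$. I would invoke existence of regular conditional distributions, which holds because each $y_k$ is finite-valued. I would then check that the chain-rule step $H(y_k\mid z)\ge H(y_k\mid\mathcal A)$, i.e.\ data processing for the $\sigma$-algebra $\sigma(z)\subseteq\sigma(\mathcal A)$, holds in that generality. It does, by concavity of entropy and Jensen's inequality applied to the tower property. Everything else is finite-alphabet bookkeeping.
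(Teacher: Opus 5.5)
Your proposal is correct and follows essentially the same route as the paper's proof: you identify $R^\star=H(\mathbf y\mid\mathcal A)$ and $R_{\mathrm{TAE}}(z)=\sum_k H(y_k\mid z)$ via Gibbs' inequality, use that $z$ is a deterministic function of $\mathcal A$ to write $H(y_k\mid z)-H(y_k\mid\mathcal A)=I(\mathcal A;y_k\mid z)$, and then read off non-negativity and the equality case. The only cosmetic difference is how you show $\mathrm{TC}(\mathbf y\mid\mathcal A)\ge 0$: you use the chain-rule sum $\sum_k I(y_k;y_{<k}\mid\mathcal A)$, whereas the paper uses the expected-KL form $\mathbb E_{\mathcal A}D_{\mathrm{KL}}\big(P(\mathbf y\mid\mathcal A)\,\|\,\prod_k P(y_k\mid\mathcal A)\big)$, and both are equally valid.
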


The two terms have disjoint dependence and therefore demand different structures. $\Delta_{\mathrm{src}}(z)$ carries the encoder and vanishes once $z$ is marginally sufficient per task, so a better encoder can shrink it. $\Delta_{\mathrm{rel}}$ denotes the floor of relation collapse and carries no $z$, so it bounds every factorized read-out from below no matter how large the encoder is or how it is trained. Shrinking $\Delta_{\mathrm{src}}$ asks for task conditioning inside the encoder, which we realize as \emph{Dual-Stream Decoupling}. Alleviating $\Delta_{\mathrm{rel}}$ requires decisions that condition on one another, which we realize as \emph{Task-Interactive Modeling}.

%% file: tex/Methodology.tex
\section{Methodology}

\begin{figure*}[t]
\centering
\includegraphics[width=\linewidth]{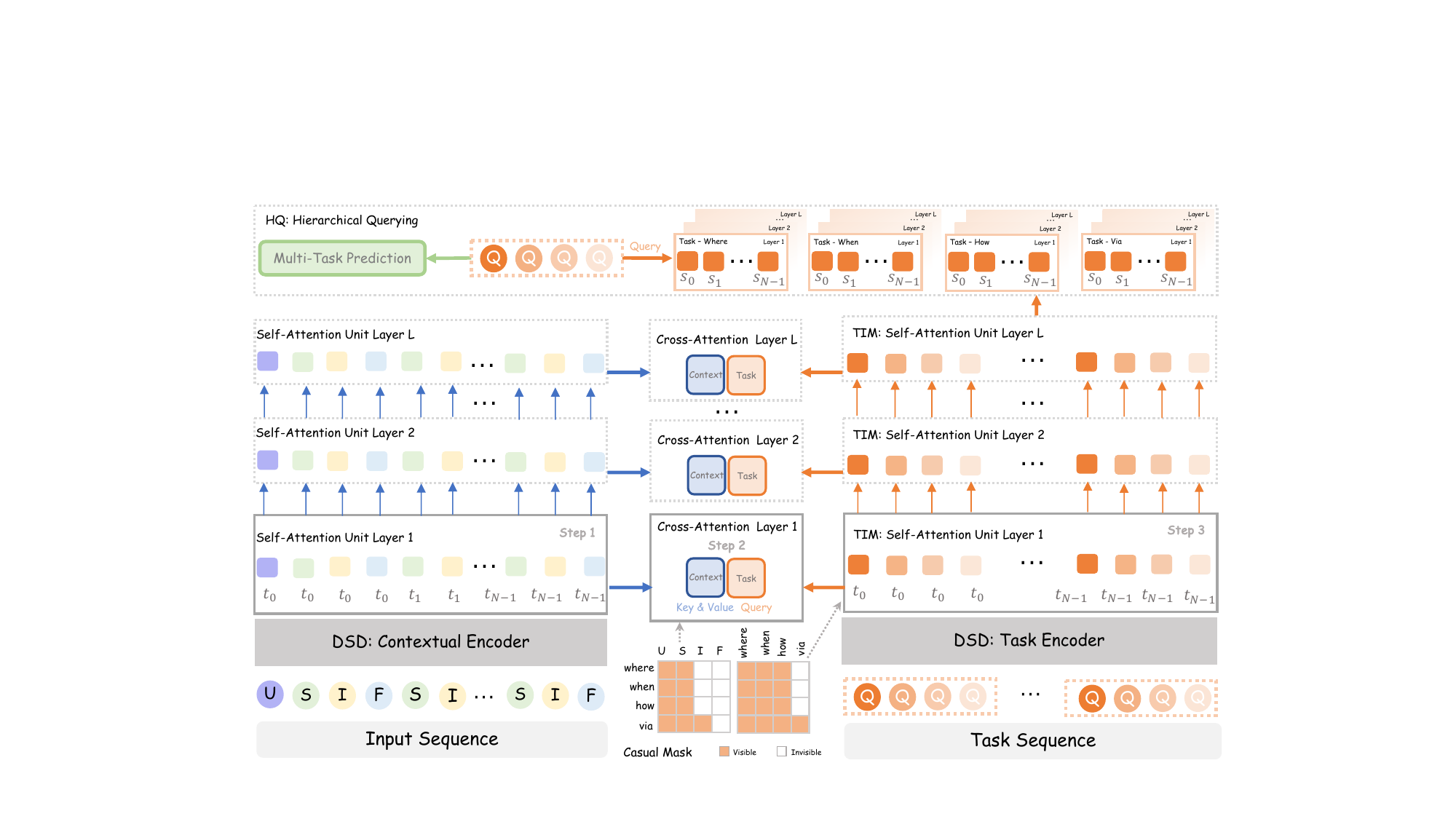}
\caption{\textbf{The detailed architecture of \modelname. \modelname is composed of three components, Dual-Stream Decoupling, Task-Interactive Modeling and Hierarchical Querying. The two streams are encoded separately and never concatenated. The mask between task tokens and context tokens is illustrated in the figure below, while visibility within the context tokens follows standard causal attention.}}
\label{fig:overview}
\end{figure*}

In this section, we present \modelname. As illustrated in Figure \ref{fig:intro_ovewview}, \modelname targets multi-task travel recommendation, where predictions are grounded in real-world spatiotemporal constraints and the dependencies among tasks shift dynamically with the scenario. \modelname targets three collapses found in the pilot study (\S\ref{sec:pilot}) with three core components as shown in Figure \ref{fig:overview}. \textbf{Dual-Stream Decoupling (DSD)}, targeting the source collapse, separates context encoding from task encoding. \textbf{Task-Interactive Modeling (TIM)} alleviates the relation collapse by enabling explicit and direct interactions among tasks. \textbf{Hierarchical Querying (HQ)} lets each task token gather information at multiple scales across layers. In the following subsections, we describe each component in detail.

\subsection{DSD: Dual-Stream Decoupling}
\label{sec:dsd}

The Dual-Stream Decoupling (DSD) module captures user intent and task-specific inclinations. DSD takes the session-structured user behavior sequence $\mathcal{A}$ as input and correspondingly constructed sequence of task tokens $\mathcal{Q}_u$. Two streams are encoded and interacted in decoupled latent spaces.

\subsubsection{Sequence Construction.} 

Raw interaction logs are heterogeneous. Each decision is made under an observable context, lands on a concrete item, and ends with a certain type of feedback. We map every logged session to three type context tokens. The scenario token $S$ encodes the decision context, such as time, location and weather. The item token $I$ encodes the consumed object and its attributes, e.g., POIs that users interacted. The feedback token $F$ encodes how the user engaged with the item, e.g., click, search, or order. A small set of profile tokens $U$ carrying long-term user attributes is appended once per sequence, serving as a global, time-invariant context.

The sequence $\mathcal{A}_u$ is organized session by session in temporal order, with all tokens of a session sharing the same timestamp, as in Eq.~\ref{eq:session}. The session is the natural unit of a decision. In travel, one trip jointly realizes \emph{when} to depart, \emph{where} to go, \emph{how} to travel, and \emph{via} which route. 

In parallel to $\mathcal{A}_u$, we construct the task stream $\mathcal{Q}_u$, as in Eq.~\ref{eq:task-stream}. Each task token $q_k$ is a task-specific learnable parameter shared across all sequences. It serves as a query template for task $k$ and is instantiated once per session. Task tokens carry the supervision of its task and query the context under a causal constraint that forbids label leakage.

\subsubsection{Decoupled Dual-Stream Attention.}
 
The two streams are encoded by independent attention cores with disjoint parameters. Within each layer, context core $\mathrm{Attn}_{\mathrm{ctx}}$ and task core $\mathrm{Attn}_{\mathrm{q}}$ encoding proceeds in two steps.
 
\paragraph{Context self-attention.}
The context stream applies self-attention over all context tokens.
\begin{equation}
\tilde{H}^{(\ell)}_{\mathrm{ctx}}
= \mathrm{Attn}_{\mathrm{ctx}}\!\big(
    H^{(\ell)}_{\mathrm{ctx}},\;
    H^{(\ell)}_{\mathrm{ctx}}
  \big),
\end{equation}
where $H^{(\ell)}_{\mathrm{ctx}}$ denotes the context representations at layer $\ell$. This step distills a task-agnostic summary of user intent from the behavioral history.

\paragraph{Task context cross-attention.}
Each task token queries the context stream.
\begin{equation}
\label{eq:tcc}
\tilde{H}^{(\ell)}_{\mathrm{q}}
= \mathrm{Attn}_{\mathrm{q}}\!\big(
    H^{(\ell)}_{\mathrm{q}},\;
    H^{(\ell)}_{\mathrm{ctx}}
  \big),
\end{equation}
so that the query, key, and value projections of $\mathrm{Attn}_{\mathrm{q}}$ are learned \emph{exclusively} from task-token gradients, while the key--value context is supplied by the context stream.  Crucially, the key--value set is drawn from the \emph{input} of the current layer $H^{(\ell)}_{\mathrm{ctx}}$, not from the self-attended output $\tilde{H}^{(\ell)}_{\mathrm{ctx}}$, so that the two steps remain
computationally independent within a layer.

\paragraph{Causal visibility.}
All attention operations in \modelname, context self-attention, task--context cross-attention, and the task self-attention introduced in \S\ref{sec:tim} obey the same visibility rule. Query at position $p_q$ may attend to a key at position $p_k$ if and only if $c(p_k) \le c(p_q)$, where $c(\cdot)$ is a scalar causal index assigned during sequence construction. The index encodes two constraints simultaneously: (i)~\emph{cross-session causality.} Tokens in earlier sessions are visible to all later tokens; and (ii)~\emph{intra-session ordering.} Within a session, a token sees only the facets that causally precede it, e.g., a \textit{via} token sees the realized \textit{where} token, but not vice versa. Profile tokens receive the smallest causal index and are therefore visible to every query. Following prior work, we add a learnable relative-position bias and a time-delta bias to the attention module, both indexed by the same positional structure that defines the causal index $c(\cdot)$. Figure~\ref{fig:overview} visualizes the causal rule in task-context cross attention and self-attention in task tokens, while visibility within the context tokens follows standard causal attention.

\subsection{TIM: Task-Interactive Modeling}
\label{sec:tim}

Dual-Stream Decoupling equips each task token with a context-conditioned representation via cross-attention, but these representations are computed independently. Task tokens do not yet see one another. In real-world multi-task recommendation, the dependency of tasks is usually non-trivial. Take travel recommendation as an example, the choice of destination constrains the feasible transport modes, which in turn narrows the departure time. Leaving such dependencies to be absorbed implicitly by the shared context stream is precisely the relation collapse identified in \S\ref{sec:pilot}. Task-Interactive Modeling (TIM) alleviates this gap by adding a self-attention step over the task tokens.

\paragraph{Task self-attention.}
At each layer, the task stream applies self-attention using the same decoupled core $\mathrm{Attn}_{\mathrm{q}}$ in Equation \ref{eq:tcc}:
\begin{equation}
\hat{H}^{(\ell)}_{\mathrm{q}}
= \mathrm{Attn}_{\mathrm{q}}\!\big(
    H^{(\ell)}_{\mathrm{q}},\;
    H^{(\ell)}_{\mathrm{q}}
  \big).
\end{equation}
The task token representations are context-conditioned. Self-attention therefore operates on representations that are both \emph{task-identified} by the learnable embedding $e_k$ and \emph{session-grounded} by the accumulated cross-attention history. This completes an explicit interaction among task tokens that is learned from data rather than prescribed by a predefined task structure.

\paragraph{Causal visibility among tasks.}
The self-attention obeys the same visibility rule $c(p_k)\le c(p_q)$ introduced in \S\ref{sec:dsd}. \textit{via} token can attend to the \textit{where}, \textit{when} and \textit{how} tokens in the same session, conditioning the waypoint on the already-decided destination and mode, whereas the reverse is blocked. This ordering prevents label leakage while still allowing downstream facets to exploit upstream decisions.

\paragraph{Inter-Layer Connections.}
The attention calculation in \S\ref{sec:dsd} and \S\ref{sec:tim} are both computed from the layer input, and enter a single residual update.
\begin{equation}
\label{eq:layer-update}
H^{(\ell+1)}_{\mathrm{q}}
=H^{(\ell)}_{\mathrm{q}}
+\underbrace{\tilde H^{(\ell)}_{\mathrm{q}}}_{\text{context}}
+\underbrace{\hat H^{(\ell)}_{\mathrm{q}}}_{\text{cross-task}},
\qquad
H^{(\ell+1)}_{\mathrm{ctx}}
=H^{(\ell)}_{\mathrm{ctx}}+\tilde H^{(\ell)}_{\mathrm{ctx}},
\end{equation}

\subsection{HQ: Hierarchical Querying}
\label{sec:hq}

Section~\ref{sec:pilot} shows that the most informative depth differs across tasks and moves during training. A single final-layer read-out flattens these differences. Hierarchical Querying lets each task select the depth it needs, using only the task stream.

\paragraph{Per-layer collection.}
At every encoder layer $\ell$, the task-stream output is aggregated across residual paths, normalized, and stored.
\S\ref{sec:dsd} and \S\ref{sec:tim} yield a depth-indexed bank $\{h^{(1)}_{q}, \dots, h^{(L)}_{q}\}$ of task representations. We collect from the task stream rather than the context stream because the task-token representations have already been conditioned on each task's information needs through cross-attention. The depth aggregation in HQ therefore only needs to select which depth matters, not which information is task-relevant.

\paragraph{Depth attention.}
For task $k$ we stack the $L$ layer states collected above into $Z_{k}=\big[H^{(1)}_{\mathrm q,k};\dots;H^{(L)}_{\mathrm q,k}\big]
\in\mathbb R^{L\times d}$, and attend over its depth axis. The query is formed from the task identity, and each layer is mapped to a key and a value with a learnable depth embedding $d_\ell$ marking which layer it came from,
\begin{equation}
\label{eq:hq}
\left\{
\begin{aligned}
Q_k &= W_Q\,q_k, \qquad
K_\ell = W_K\,Z_{k}[\ell] + d_\ell, \qquad
V_\ell = W_V\,Z_{k}[\ell], \\
\alpha^{(k)}_\ell &= \frac{\exp\!\big(\langle Q_k,K_\ell\rangle/\sqrt{d_a}\big)}
{\sum_{\ell'=1}^{L}\exp\!\big(\langle Q_k,K_{\ell'}\rangle/\sqrt{d_a}\big)}, \\
z_{k} &= \mathrm{LN}\Big(H^{(L)}_{\mathrm q,k}
+\textstyle\sum_{\ell=1}^{L}\alpha^{(k)}_\ell\,V_\ell\Big).
\end{aligned}
\right.
\end{equation}    
where $q_k$ is the learnable embedding of task $k$. The score in \eqref{eq:hq} depends on the task through $q_k$ and on the context of each layer through $\kappa_\ell$. The selected depth varies with the task, the session, and the user, rather than being fixed by a schedule.

\subsection{Training and Loss}
\label{sec:loss}

Each task is trained by its own InfoNCE loss~\citep{oord2018representation}. The output of Hierarchical Querying is passed through a task-specific head to obtain the final task representation. Note that \modelname is head-agnostic. We further prove out \modelname paradigm outperforms different generative backbone across different task heads in Table \ref{tab:main}. The loss maximizes the probability of the ground-truth item $i^{+}$ over $\mathcal{I}_k$. The final multi-task loss $L$ is the sum of the per-task losses, as shown in Eq.~\ref{eq:loss}. The overall training procedure is summarized in Algorithm~\ref{alg:model}.

\begin{algorithm}[h]
\caption{\modelname forward and training step}
\label{alg:model}
\begin{algorithmic}[1]
\Require context embeddings $H^{(0)}_{\mathrm{ctx}}$, task embeddings
         $H^{(0)}_{\mathrm{q}}$ with $H^{(0)}_{\mathrm{q}}[t,k]=e_{\mathrm{cls}}+e_k$,
         layers $L$, decoupled cores
         $\mathrm{Attn}_{\mathrm{ctx}},\mathrm{Attn}_{\mathrm{q}}$,
         causal masks $M_{\mathrm{cc}},M_{\mathrm{qc}},M_{\mathrm{qq}}$
\Ensure  multi-task loss $\mathcal{L}$
\For{$\ell = 0$ \textbf{to} $L-1$}
  \Comment{the three calls below read the same layer input and run in parallel}
  \State $\tilde H_{\mathrm{ctx}} \leftarrow \mathrm{Attn}_{\mathrm{ctx}}\big(H^{(\ell)}_{\mathrm{ctx}},H^{(\ell)}_{\mathrm{ctx}};M_{\mathrm{cc}}\big)$
         \Comment{context self-attention}
  \State $\tilde H_{\mathrm{q}} \leftarrow \mathrm{Attn}_{\mathrm{q}}\big(H^{(\ell)}_{\mathrm{q}},H^{(\ell)}_{\mathrm{ctx}};M_{\mathrm{qc}}\big)$
         \Comment{task$\to$context cross-attention}
  \State $\hat H_{\mathrm{q}} \leftarrow \mathrm{Attn}_{\mathrm{q}}\big(H^{(\ell)}_{\mathrm{q}},H^{(\ell)}_{\mathrm{q}};M_{\mathrm{qq}}\big)$
         \Comment{task self-attention}
  \State $H^{(\ell+1)}_{\mathrm{ctx}} \leftarrow H^{(\ell)}_{\mathrm{ctx}}+\tilde H_{\mathrm{ctx}}$
         \Comment{residual update, task-free}
  \State $H^{(\ell+1)}_{\mathrm{q}} \leftarrow H^{(\ell)}_{\mathrm{q}}+\tilde H_{\mathrm{q}}+\hat H_{\mathrm{q}}$
  \State $h^{(\ell+1)}_{\mathrm{q}} \leftarrow \mathrm{LN}^{(\ell+1)}\big(H^{(\ell+1)}_{\mathrm{q}}\big)$
         \Comment{keep for hierarchical querying}
\EndFor
\State $\mathcal{L} \leftarrow 0$
\For{each task $k \in \mathcal{K}$}
  \State $z_k \leftarrow \mathrm{DepthAttn}\big(e_k,\{h^{(\ell)}_{\mathrm{q}}[\cdot,k]\}_{\ell=1}^{L}\big)$
         \Comment{hierarchical querying}
  \State $o_k \leftarrow \mathrm{Head}_k(z_k)$
  \State $\mathcal{L} \leftarrow \mathcal{L} + \mathrm{InfoNCE}\big(o_k,i^{+};\mathcal{I}_k\big)$
         \Comment{Eq.~\ref{eq:loss}}
\EndFor
\State update parameters by $\nabla\mathcal{L}$; \Return $\mathcal{L}$
\end{algorithmic}
\end{algorithm}

We build the candidate set $\mathcal{I}_k$ per task. For a task with a small label space, such as \emph{how} and \emph{when}, $\mathcal{I}_k$ is the full class set and the loss is a standard softmax cross entropy. For a task with a very large label space, such as \emph{where} and \emph{via} over POIs, we replace the full softmax by a sampled softmax, so $\mathcal{I}_k$ is the ground-truth item together with a set of sampled negatives. \modelname, including the two streams, the task interaction, and the hierarchical query, is trained end to end by minimizing $L$.

\begin{equation}
\label{eq:loss}
L=\sum_{k\in\mathcal{K}}-\frac{1}{|\mathcal{A}|}\sum_{u\in\mathcal{U}}\sum_{a\in\mathcal{A}_u}
\log\frac{\exp(\hat{y}^{\,i^{+}}_k)}{\sum_{i\in\mathcal{I}_k}\exp(\hat{y}^{\,i}_k)} .
\end{equation}

%% file: tex/exp.tex
\section{Experiment}

\subsection{Diagnosing Hierarchical Collapse}
\label{sec:pilot-hier}

\begin{figure}[h]
\centering
\includegraphics[width=0.8\linewidth]{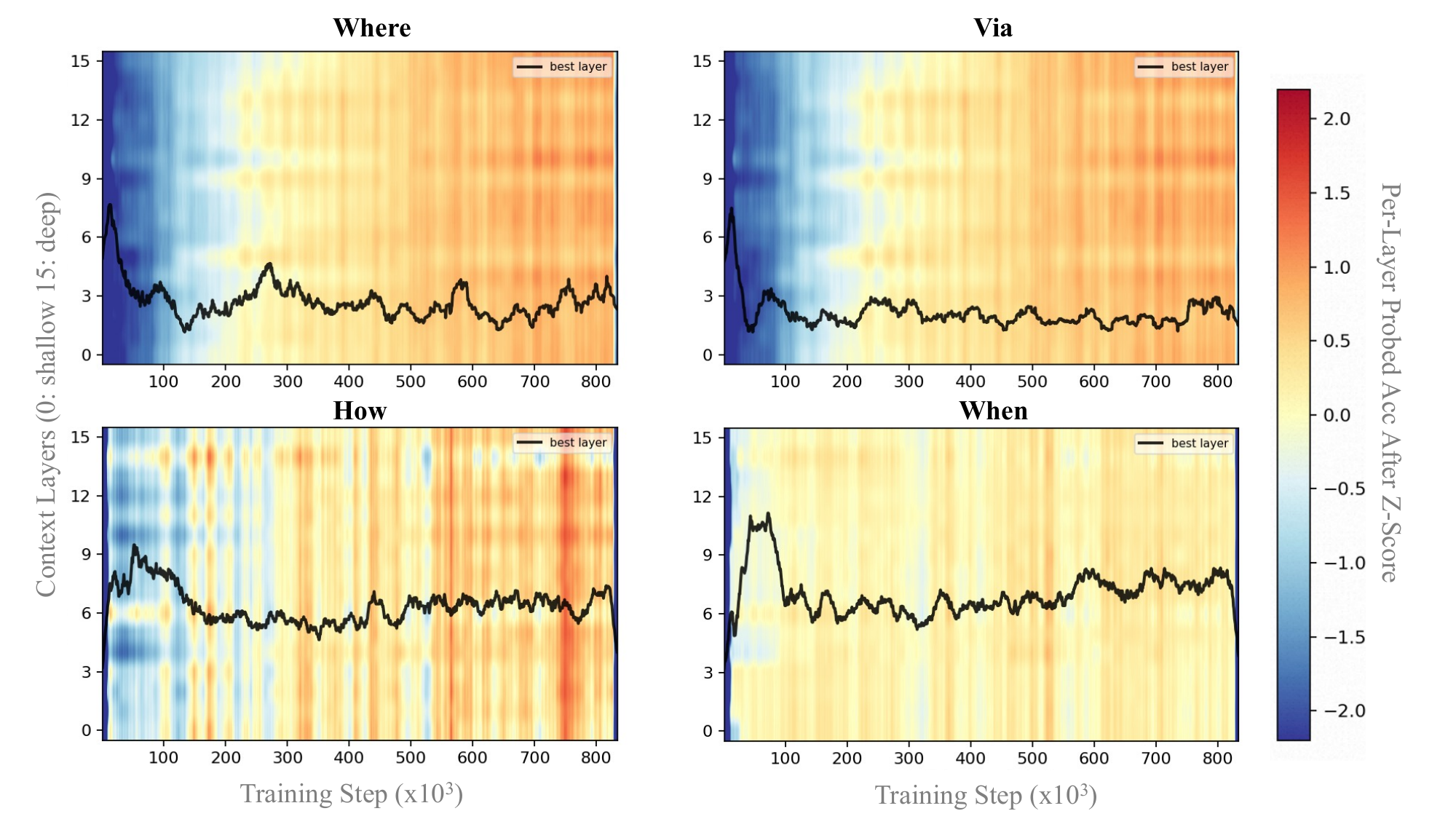}
\caption{\textbf{Per-layer probed accuracy over training}. Color is the per-layer probed accuracy $z$-scored along training (warm${=}$high, cool${=}$low). The black curve is the smoothed best layer $\arg\max_\ell$.For different tasks, the reliance on features of different scales varies at different stages of training.}
\label{fig:probe}
\end{figure}

To align with our online system, we train an $L{=}16$ layers baseline model, implemented as IntTravel~\citep{yan2026inttravel}, on production data \footnote{The production data follows the same data format as the public benchmark~\citep{yan2026inttravel}, and differs only in scale.} and attach per-layer linear probing network. For each layer $\ell$, a probe predicts $y_k$ from $z_\ell$. Figure~\ref{fig:probe} reports the per-layer probed accuracy over training. The most-decodable layer differs systematically across tasks. Each task's best layer (black curve) drifts throughout training, so no fixed depth is optimal even for one task across time. Fine-grained retrieval (\emph{Where}, \emph{Via}) peaks at shallow layers ($\ell\!\approx\!1$--$3$), whereas coarse intent (\emph{How}, \emph{When}) relies on \emph{deeper}, more abstract features ($\ell\!\approx\!5$--$7$). Intuitively, depth is governed by the range of context each task consumes. Shallow layers preserve high-resolution, immediate signals from the current session, while deeper layers progressively aggregate behaviors into long-range patterns, and tasks benefit from the two ends differently. Travel-mode prediction (\emph{How}) profits from deep features that accumulate a user's commuting regularities across many sessions. Destination retrieval (\emph{Where}) hinges on the session-level scenario. The immediate spatio-temporal context exposes in shallow, high-resolution features.

\subsection{Experiment Settings}
\subsubsection{Datasets.}

We adopt IntTravel~\citep{yan2026inttravel}, a large-scale dataset built from logs of a real-world industrial map and navigation platform. It contains about $4.1$ billion interactions from $163$ million users over $7.3$ million POIs, which is two to three orders of magnitude larger than prior benchmarks~\citep{cheng2011exploring,cho2011friendship,yang2014modeling,yang2016participatory,monti2018semantic,yang2019revisiting}, and it covers four coupled tasks: \emph{when} to depart, \emph{where} to go, \emph{how} to travel, and \emph{via} which on-the-way POI to visit.

\begin{table}[h]
\centering
\caption{Offline performance comparison of four encoders paired with four multi-task heads
on the IntTravel dataset. $\uparrow$ indicates higher is better and $\downarrow$ lower is better. Within each multi-task head, the best encoder is highlighted with a \colorbox{dreamblue!15}{blue background} and the second best is in \textcolor{dreamblue}{blue} ($t$-test, $p$-value $<$ 0.01).}
\label{tab:main}
\vspace{6pt}
\setlength{\tabcolsep}{3pt}
\resizebox{\textwidth}{!}{%
\begin{tabular}{ll cc cc ccc ccc}
\toprule
\multirow{2}{*}{Encoder} & \multirow{2}{*}{Head}
& \multicolumn{2}{c}{\emph{When}}
& \multicolumn{2}{c}{\emph{How}}
& \multicolumn{3}{c}{\emph{Where}}
& \multicolumn{3}{c}{\emph{Via}} \\
\cmidrule(lr){3-4} \cmidrule(lr){5-6} \cmidrule(lr){7-9} \cmidrule(lr){10-12}
& & Acc$\uparrow$ & MAE$\downarrow$
  & Acc$\uparrow$ & BCR$\downarrow$
  & HR@1$\uparrow$ & HR@5$\uparrow$ & CIR$\downarrow$
  & HR@1$\uparrow$ & HR@5$\uparrow$ & CIR$\downarrow$ \\
\midrule
\multirow{4}{*}{IntTravel}
& PLE     & \second{0.8333} & \second{8.0003} & \second{0.6779} & \second{0.0691} & \second{0.6731} & \second{0.8652} & \second{0.2474} & 
0.6760 & 0.8659 & 0.2459 \\
& STAR    & \second{0.8327} & \second{8.0301} & \second{0.6735} & \second{0.0708} & \second{0.6931} & \second{0.8752} & \second{0.2327} & 
0.7032 & \second{0.8785} & 0.2259 \\
& DSFNet  & \second{0.8331} & \second{8.0127} & \second{0.6740} & \second{0.0714} & \second{0.6956} & \second{0.8757} & \second{0.2305} & 
\second{0.7050} & \second{0.8788} & \second{0.2242} \\
& HoME    & \second{0.8328} & \second{8.0249} & \second{0.6769} & \second{0.0702} & \second{0.6720} & \second{0.8645} & \second{0.2484} & 
0.6778 & 0.8663 & 0.2445 \\
\midrule
\multirow{4}{*}{OneTrans}
& PLE     & 0.8314 & 8.0910 & 0.6374 & 0.0942 & 0.6443 & 0.8583 & 0.2716 & \second{0.6765} & \second{0.8663} & \second{0.2455} \\
& STAR    & 0.8310 & 8.1135 & 0.6147 & 0.0995 & 0.6750 & 0.8694 & 0.2480 & \second{0.7034} & 0.8772 & \second{0.2252} \\
& DSFNet  & 0.8311 & 8.1048 & 0.6338 & 0.0863 & 0.6679 & 0.8687 & 0.2532 & 0.7034 & 0.8776 & 0.2248 \\
& HoME    & 0.8315 & 8.0895 & 0.6480 & 0.0849 & 0.6474 & 0.8589 & 0.2691 & \second{0.6800} & \second{0.8672} & \second{0.2430} \\
\midrule
\multirow{4}{*}{HGenPush}
& PLE     & 0.8310 & 8.1119 & 0.5677 & 0.1236 & 0.6556 & 0.8538 & 0.2643 & 0.6724 & 0.8639 & 0.2494 \\
& STAR    & 0.8310 & 8.1114 & 0.5657 & 0.1253 & 0.6724 & 0.8603 & 0.2531 & 0.7025 & 0.8763 & 0.2270 \\
& DSFNet  & 0.8310 & 8.1123 & 0.5654 & 0.1284 & 0.6672 & 0.8601 & 0.2562 & 0.7019 & 0.8765 & 0.2268 \\
& HoME    & 0.8309 & 8.1116 & 0.5674 & 0.1242 & 0.6537 & 0.8528 & 0.2657 & 0.6773 & 0.8656 & 0.2452 \\
\midrule
\multirow{4}{*}{IntHQ (Ours)}
& PLE     & \best{0.8348} & \best{7.9289} & \best{0.6907} & \best{0.0652} & \best{0.6842} & \best{0.8707} & \best{0.2395} & \best{0.6839} & 
\best{0.8706} & \best{0.2400} \\
& STAR    & \best{0.8347} & \best{7.9366} & \best{0.6899} & \best{0.0654} & \best{0.7050} & \best{0.8819} & \best{0.2239} & \best{0.7129} & 
\best{0.8830} & \best{0.2188} \\
& DSFNet  & \best{0.8347} & \best{7.9341} & \best{0.6896} & \best{0.0656} & \best{0.7050} & \best{0.8828} & \best{0.2232} & \best{0.7119} & 
\best{0.8841} & \best{0.2190} \\
& HoME    & \best{0.8345} & \best{7.9433} & \best{0.6896} & \best{0.0657} & \best{0.6839} & \best{0.8712} & \best{0.2397} & \best{0.6818} & 
\best{0.8699} & \best{0.2424} \\
\bottomrule
\end{tabular}}
\end{table}

\subsubsection{Baselines.}
We organize baselines along two orthogonal axes: the \emph{sequence encoder}, which determines whether and how task signals enter the encoding computation, and the \emph{multi-task head}, which determines how the encoded representation is shared and specialized across tasks.

\paragraph{Sequence encoders.}
We compare four encoders that form a progression of task-signal injection. \textbf{IntTravel}~\citep{yan2026inttravel} is a HSTU-style backbone tailored to travel recommendation. It encodes the session-structured sequence without any task token, and tasks branch off only after encoding. \textbf{OneTrans}~\citep{zhang2026onetrans} unifies sequential and non-sequential features into one token sequence with mixed parameterization. \textbf{HGenPush}~\citep{liang2026hgenpush} takes a first step toward task-aware encoding. Following decoder-only organization of user behaviors and per-branch $[cls]$ tokens, task tokens are inserted into the main sequence, but they share one stream and one set of parameters with the context tokens. 

\paragraph{Multi-task heads.}
Each encoder is paired with four representative heads that cover the mainstream design space of multi-task specialization. \textbf{PLE}~\citep{tang2020progressive} routes shared and task-specific experts through progressive layered extraction to mitigate negative transfer. \textbf{STAR}~\citep{sheng2021one} multiplies a shared centered network with task-specific weights in a star topology. \textbf{DSFNet}~\citep{yu2024dsfnet} is our production head with per-task private experts on top of a shared expert pool. \textbf{HoME}~\citep{wang2025home} organizes hierarchical mixture-of-experts with feature gating and self-gated residuals for grouped tasks. 

\subsubsection{Evaluation Metrics.}
We follow the evaluation protocol of IntTravel~\citep{yan2026inttravel}, pairing each task with positive indicators and a negative indicator to capture both prediction quality and user-experience cost. For the \emph{when} task, we report Accuracy (Acc$\uparrow$), the exact-match rate of the predicted departure time, and Mean Absolute Error (MAE$\downarrow$) as the average deviation from the ground truth.
For the \emph{how} task, we report Acc$\uparrow$ over travel modes and the Bad Case Rate (BCR$\downarrow$), defined as the top-3 miss rate, since navigation apps display only three modes on the primary screen. For the \emph{where} and \emph{via} tasks, we report HitRate@\{1,5\}$\uparrow$ over the candidate POIs and the Category Inconsistency Rate (CIR$\downarrow$), the rate at which the top-1 prediction neither matches the ground-truth POI nor shares its category.

\subsubsection{Implementation Details.}
All models share identical training configurations. The embedding dimension is $96$, the maximum sequence length to 120, and the batch size is $64$ per worker on $8$ PPU GPUs. \emph{For a fair evaluation, we follow the experimental setup of IntTravel. All encoders are stacked with $3$ layers.} Models are trained with AdamW (learning rate $8\times10^{-4}$, weight decay $10^{-6}$) for one epoch. The POI-retrieval tasks (\emph{where}, \emph{via}) are trained with sampled softmax over $64$ negatives per positive: $14$ drawn uniformly from the full POI vocabulary, and $50$ distance-based hard negatives. The context and task streams use fully decoupled attention parameters, and the hierarchical query attention uses dimension $64$ with a single head. Baseline encoders (OneTrans, HGenPush) use $4$-head attention with an FFN. Baseline heads adopt the hyperparameters reported in their original papers, adapted to the shared embedding dimension. All encoder--head combinations are trained with the same data order and random seeds for comparability.

\subsection{Overall Performance Comparison}

We evaluate \modelname and all baselines on the industrial-scale IntTravel dataset, with the full encoder--head grid reported in Table~\ref{tab:main}. Under every head, \modelname outperforms all encoder baselines on all ten metrics, improving positive and negative indicators simultaneously. The gains are a uniform expansion of the quality frontier rather than a trade-off among tasks. As for the encoder axis, IntTravel strengthens the backbone but carries no task token and no task interaction. Tasks branch off only after a task-agnostic encoding, leaving both collapses intact. OneTrans unifies all features into a single stream, so task signals never enter the encoder, and its single-depth read-out serves every task from the same layer. The two ends of the depth spectrum are then in direct competition, and the loss is dominated by the large-vocabulary tasks, so the coarse-intent tasks \emph{How} and \emph{When}, which rely on abstract long-range features, degrade the most. HGenPush does insert task tokens, but share one stream, as source collapse predicts, the two roles dilute each other, and it performs worst precisely on the same task (\emph{How}). \modelname completes the progression. Through early task injection in a decoupled stream, explicit task interaction, and hierarchical read-out, \modelname achieves the largest margins exactly where the baselines fail. Meanwhile, within \modelname the performance spread across the four heads narrows markedly, encoder-side task conditioning absorbs most of the specialization work, confirming that the advantage stems from the structure itself rather than any particular multi-task head. Notably, these improvements come at an acceptable computational cost. We detail the comparison in Appendix~\ref{app:cc}, taking STAR as a representative task head for illustration.

\subsection{Ablation Study}

\begin{table}[t]
\centering 
\caption{Ablation studies of \modelname on the four tasks. We report Acc for \emph{when}/\emph{how} and HR@1 for \emph{where}/\emph{via}.}
\label{tab:ablation}
\setlength{\tabcolsep}{3pt}
\begin{tabular}{l cccc}
\toprule
Task (Metric) & \modelname (full) & w/o DSD & w/o TIM & w/o HQ \\
\midrule
\emph{When} (Acc$\uparrow$)   & \best{0.8347} & 0.8329 & 0.8347 & 0.8330 \\                                                                            
\emph{Where} (HR@1$\uparrow$) & \best{0.7050} & 0.6893 & 0.7049 & 0.7004 \\
\emph{How} (Acc$\uparrow$)    & \best{0.6896} & 0.6728 & 0.6867 & 0.6757 \\
\emph{Via} (HR@1$\uparrow$)   & \best{0.7119} & 0.6991 & 0.7102 & 0.7030 \\
\bottomrule
\end{tabular}
\end{table}

Table~\ref{tab:ablation} ablates the three components of \modelname, each of which proves necessary, and the pattern of degradation aligns with the collapse each component targets. Replacing \textbf{DSD} as the shared encoder causes significant and uniform drops across all four tasks, as expected for source collapse, whose dilution effect is global rather than task-specific. Removing \textbf{TIM} leaves the driver tasks (\emph{When}, \emph{Where}) nearly untouched but visibly hurts \emph{How} and \emph{Via}, the facets whose decisions are conditioned on other tasks. This asymmetry is direct evidence that the gains of explicit task interaction come from modeling inter-task coupling. Removing \textbf{HQ} produces degradation on four tasks, confirming that a single read-out depth cannot serve heterogeneous granularity.

\subsection{Scaling Law}

A central promise of generative recommendation is that performance scales with compute. We verify that \modelname inherits this property by sweeping the encoder depth $L\in\{2,4,8,16,32\}$ and width $D\in\{24,48,96,192,384\}$ with all other hyper-parameters fixed, training each variant on the same data stream. Figure~\ref{fig:scaling} shows related metrics across four tasks. Deeper and wider encoders present better performance as expected. The dual-stream structure scales without modification. Adding model capacity enlarges both the context stream's capacity for long-range patterns and the bank of representations that Hierarchical Querying can draw from.

\begin{figure}[h]
\centering
\includegraphics[width=\linewidth]{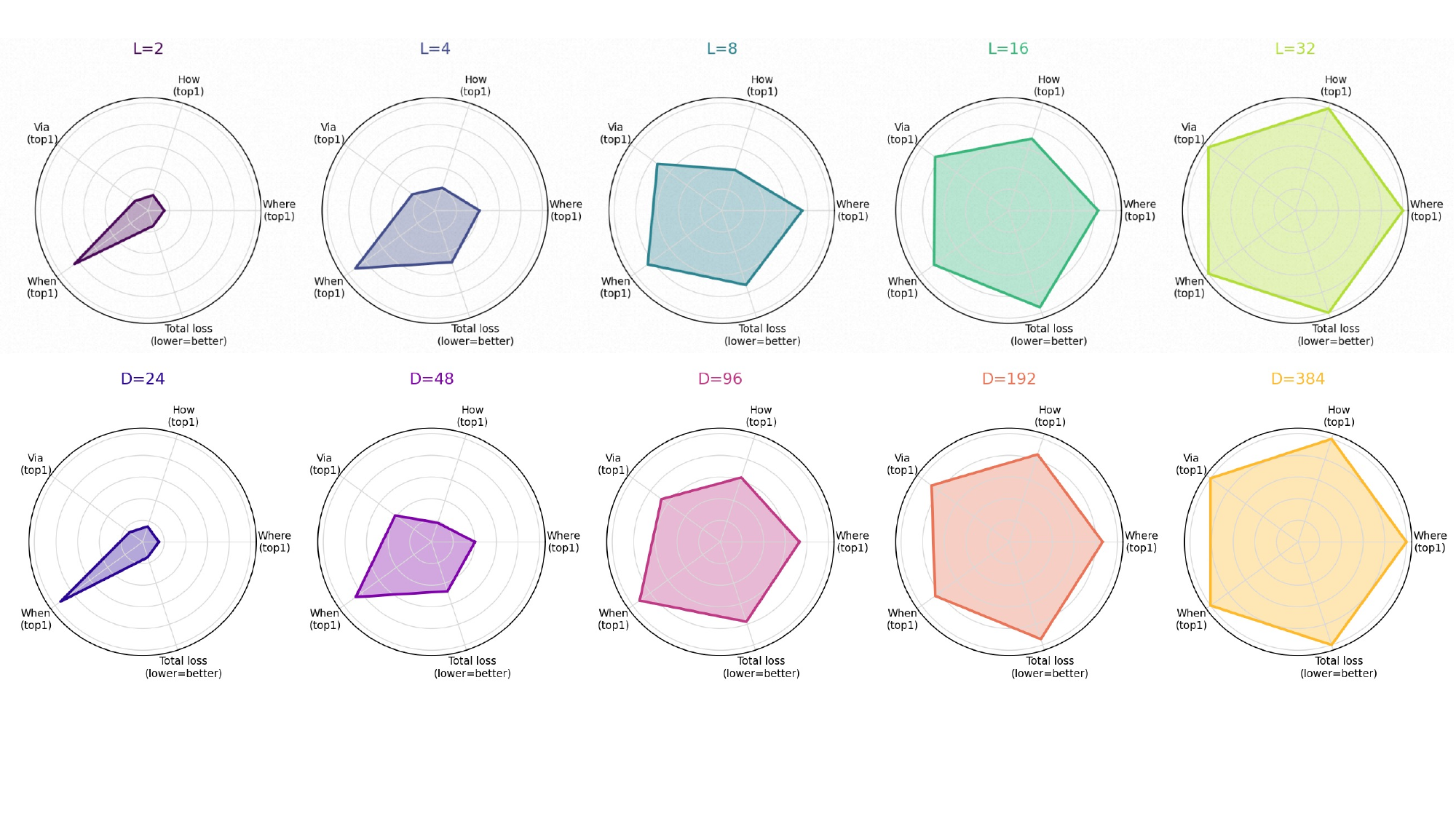}
\caption{Scaling with encoder depth and embedding dimension. Total loss, accuracy and hit rate of \modelname with $L\in\{2,4,8,16,32\}$ layers and $D\in\{24,48,96,192,384\}$ dimensions. Deeper and wider encoders converge to better performance, exhibiting clear depth-wise and width-wise scaling trends.}
\label{fig:scaling}
\end{figure}

\subsection{Online Deployment}

\subsubsection{Deployment Pipeline.}
Serving \modelname in production must reconcile two competing demands. The model has to track travel behavior that shifts daily while every request must be answered within a tight latency budget. Figure~\ref{fig:online} shows how we resolve them by moving all heavy, user-level computation off the request path. Sequence assembly is precomputed rather than performed online. An offline job materializes each user's session structured history from raw behavior logs into a key-value store. At request time the model service retrieves this prefix by user ID. The assembled sequence is scored on GPU computation nodes. For \emph{where} and \emph{via} the resulting task representations are matched against the POI index by approximate nearest neighbor search. Response generated by \modelname service and user feedback are written back to the log store, from which the next training round is assembled and the refreshed model is republished to the serving engines. Overall, \modelname handles a sequence length of 300, covering 100 most recent behavioral interactions, under a peak concurrency of 30k QPS (query per second). Deployed on a cluster of 150 NVIDIA PPU GPUs (96 GB memory), the average response time (RT) is 40 ms and the 99\% RT is 100 ms.

\begin{figure}[h]
\centering
\includegraphics[width=0.65\linewidth]{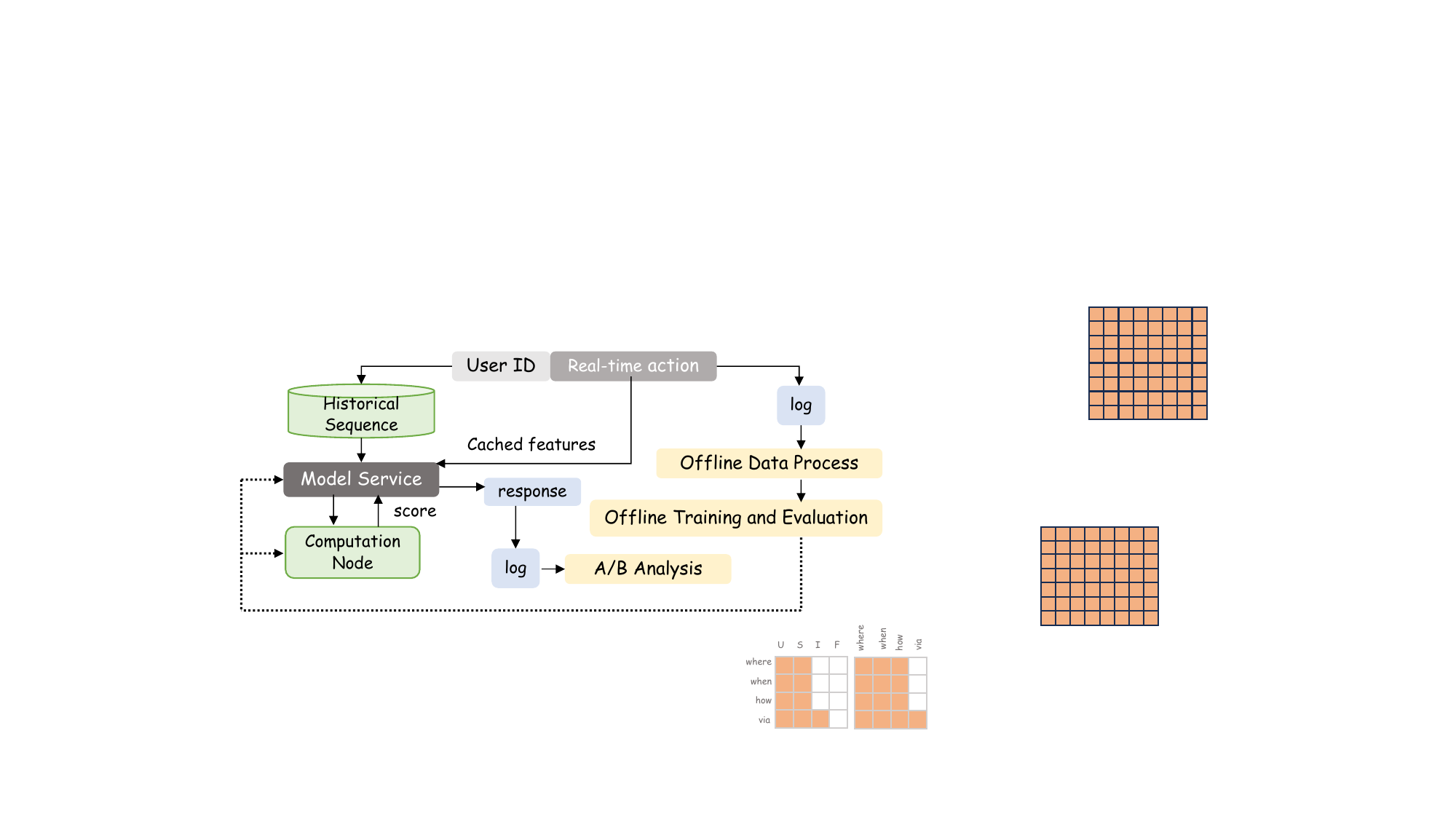}
\caption{Online deployment architecture of \modelname on Amap. Solid arrows denote the real-time serving flow. Dashed arrows denote the offline training loop that redeploys updated models to the serving engines.}
\label{fig:online}
\end{figure}

\subsubsection{Online A/B Results.}
We evaluated \modelname against the online baseline in the travel recommendation scenario of Amap, with each bucket receiving 20\% of live traffic over 7 days. \modelname improved UVCTR by 1.60\% relative. The gain held consistently across daily slices. Handling 30k QPS on a cluster of PPU GPUs, the average response time (RT) of \modelname is 40 ms. \modelname has since been fully rolled out and now serves all traffic in this scenario.

%% file: tex/related_work.tex
\section{Related Work}

\subsection{Classical MTL}
Early shared-bottom networks and Cross-Stitch~\citep{misra2016cross} share a single trunk and branch into per-task towers. Expert-based designs
refine this sharing. MMoE~\citep{ma2018modeling} routes soft-selected experts per task, and PLE~\citep{tang2020progressive} explicitly separates shared and task-specific experts through progressive layered extraction to curb negative transfer. A parallel line targets multi-domain/multi-scenario settings. STAR~\citep{sheng2021one} multiplies a shared centered network with domain-specific weights in a star topology, while M2M~\citep{zhang2022leaving} and APG~\citep{yan2022apg} generate customized network parameters from scenario or instance context. Recent mixture-of-experts variants such as STEM~\citep{su2024stem} and HoME~\citep{wang2025home} push specialization further via task-specific embeddings, stop-gradient gating, and hierarchical expert grouping. Orthogonally, an optimization-centric line attributes multi-task interference to conflicting gradients on shared parameters and manipulates the update direction, e.g., PCGrad~\citep{yu2020gradient}, GradNorm~\citep{chen2018gradnorm}, CAGrad~\citep{liu2021conflict} and Nash-MTL~\citep{navon2022multi}.

\subsection{MTL in GR}
Generative recommendation (GR) reformulates recommendation as sequential transduction over user behaviors, with backbones such as HSTU~\citep{zhai2024actions}, TIGER~\citep{rajput2023recommender}, and OneRec~\citep{deng2025onerec} scaling attention-based next-item generation. Multi-task modeling in GR, however, remains largely inherited from the classical recipe. Tasks are still read out of one shared sequence representation. IntTravel~\citep{yan2026inttravel} strengthens an HSTU backbone for travel recommendation but encodes the behavior sequence without task tokens, branching only after encoding. OneTrans~\citep{zhang2026onetrans} unifies sequential and non-sequential features into a single token stream with mixed shared/token-specific parameterization, yet it likewise carries no task token and follows the unified single-stream paradigm. HGenPush~\citep{liang2026hgenpush} and Onerank~\citep{tang2026onerank} takes a first step toward task-aware encoding by inserting task tokens into the sequence, but these tokens share one stream and one parameter set with the context tokens, task-specific signals are still diluted in the shared space.

%% file: tex/conclusion.tex
\section{Conclusion}
We revisit multi-task generative recommendation through the lens of a threefold collapse inherent in the prevailing recipe of a single task-agnostic representation with conditionally independent heads. We propose \modelname, which counters each collapse with a dedicated component. Dual-Stream Decoupling injects task signals early and separates the task stream from the context stream in parameter space. Task-Interactive Modeling learns input-dependent tasks interactions rather than a prescribed structure. Hierarchical Querying lets each task retrieve information across representations at different scale. In online A/B tests on Amap serving hundreds of millions of users, \modelname delivers a 1.60\% UVCTR relative lift. Experiments across a grid of encoders and multi-task heads confirm that the gains stem from moving task signals into the encoding stage and persist across heterogeneous head architectures. This work offers a practical way to scale generative recommender systems in complex multi-task setting.

%% file: tex/appendix.tex
\section{Proofs and Extended Statements for the Pilot Study}
\label{app:proofs}

\subsection{Source collapse motivates Dual-Stream Decoupling.}
Two statements settle the task signal must enter and on which parameters it must live. The first rules out the standard remedy.

\begin{proposition}[Late conditioning is powerless]
\label{prop:late}
If every task-specific computation is a function of the shared code, $h_k=g_k(z)$, then $\sum_k I(\mathcal A;y_k\mid h_k)\ \ge\ \Delta_{\mathrm{src}}(z)$, with equality if and only if each $g_k$ is sufficient for $y_k$ relative to $z$.
\end{proposition}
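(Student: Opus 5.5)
The plan is a termwise comparison via the data-processing structure of the Markov chain $y_k - \mathcal A - z - h_k$. Since $z=\mathrm{Enc}(\mathcal A)$ and $h_k=g_k(z)$ are deterministic (measurable) functions of $\mathcal A$, we have $H(y_k\mid \mathcal A, z)=H(y_k\mid\mathcal A)$ and $H(y_k\mid \mathcal A,h_k)=H(y_k\mid\mathcal A)$. Hence, exactly as in the definition used in Theorem~\ref{thm:unified},
\[
I(\mathcal A;y_k\mid h_k)=H(y_k\mid h_k)-H(y_k\mid\mathcal A),\qquad I(\mathcal A;y_k\mid z)=H(y_k\mid z)-H(y_k\mid\mathcal A).
\]
Subtracting gives, for each $k$,
\[
I(\mathcal A;y_k\mid h_k)-I(\mathcal A;y_k\mid z)=H(y_k\mid h_k)-H(y_k\mid z)=I(y_k;z\mid h_k),
\]
where the last equality uses that $h_k$ is a function of $z$, so $H(y_k\mid z,h_k)=H(y_k\mid z)$. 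Finiteness of $\mathcal I_k$ keeps every entropy finite, so these subtractions are legitimate.

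Summing over $k$ then yields the identity
\[
\sum_k I(\mathcal A;y_k\mid h_k)=\Delta_{\mathrm{src}}(z)+\sum_k I(y_k;z\mid h_k).
\]
Each correction term is a conditional mutual information and therefore non-negative, which gives the inequality. This holds for arbitrary $g_k$, which is exactly the point that no late, $z$-measurable processing can reduce the source gap below $\Delta_{\mathrm{src}}(z)$.

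For the equality case, since all summands are non-negative, equality holds iff $I(y_k;z\mid h_k)=0$ for every $k$. That is, iff $y_k - h_k - z$ is a Markov chain, equivalently $P(y_k\mid z)=P(y_k\mid g_k(z))$ almost surely, which is precisely sufficiency of $g_k(z)$ for $y_k$ relative to $z$. I would state this as the definition of sufficiency being invoked.

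The only delicate step is measure-theoretic: justifying the chain-rule manipulations when $z$ or $h_k$ take continuous values. Since $y_k$ is discrete, I would work with $H(y_k\mid\cdot)$ defined via regular conditional distributions and use $I(y_k;z\mid h_k)=H(y_k\mid h_k)-H(y_k\mid z,h_k)$, which requires only finite discrete entropy of $y_k$. That is guaranteed by the finiteness assumption inherited from Theorem~\ref{thm:unified}.
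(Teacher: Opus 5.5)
Your proposal is correct and follows essentially the paper's argument. Both proofs rewrite $I(\mathcal A;y_k\mid h_k)=H(y_k\mid h_k)-H(y_k\mid\mathcal A)$ and reduce the claim to $H(y_k\mid h_k)\ge H(y_k\mid z)$; the paper justifies that inequality by $\sigma(h_k)\subseteq\sigma(z)$, while you write the gap out as the non-negative term $I(y_k;z\mid h_k)$, which gives an exact identity and makes the termwise equality condition explicit.
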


Nothing placed above $z$ can reduce the source term, because each module only coarsens a statistic from the discarded information. Task conditioning must therefore happen inside the encoder. Proposition \ref{prop:source} shows that placing it inside a stream whose weights are shared with the context computation is still not enough.

\begin{proposition}[Price of a shared block]
\label{prop:source}
Let a weight block act in two roles, and write the risk on two copies of its parameters as $R(\Theta_c,\Theta_q)=L_c(\Theta_c)+L_q(\Theta_q)$, the additive form of the gradient split $\nabla_\Theta R=g_c+g_q$. Let $L_c,L_q$ have minimizers $\Theta_c^\ast,\Theta_q^\ast$ with positive-definite Hessians $H_c,H_q$, and let $\Theta^t$ minimize the tied risk $L_c(\Theta)+L_q(\Theta)$. Then at $\Theta^t$ the two roles pull with equal and opposite force,
$\nabla L_c(\Theta^t)=-\nabla L_q(\Theta^t)$, so neither role sits at its own optimum, and the tied risk exceeds the untied risk $L_c(\Theta_c^\ast)+L_q(\Theta_q^\ast)$ by
\begin{equation}
\label{eq:price}
\tfrac12\,\Delta^\top H_c\,(H_c+H_q)^{-1}H_q\,\Delta\ \ge\ 0,
\qquad \Delta:=\Theta_c^\ast-\Theta_q^\ast,
\end{equation}
which vanishes if and only if the two role optima coincide.
\end{proposition}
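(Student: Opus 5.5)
The plan is to reduce everything to a second-order (quadratic) model of the two role losses around their own minimizers, solve for the tied minimizer in closed form, and then evaluate the excess risk algebraically. Concretely, I will write
\[
L_c(\Theta)=L_c(\Theta_c^\ast)+\tfrac12(\Theta-\Theta_c^\ast)^\top H_c(\Theta-\Theta_c^\ast),\qquad
L_q(\Theta)=L_q(\Theta_q^\ast)+\tfrac12(\Theta-\Theta_q^\ast)^\top H_q(\Theta-\Theta_q^\ast).
\]
For exactly quadratic losses these identities hold globally. In general they are the local expansion, and the displayed price in \eqref{eq:price} is then the leading-order term. The statement implicitly works in this regime, and I will say so explicitly.

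\textbf{Step 1: equal and opposite gradients.} The first claim needs no quadratic assumption. Since $\Theta^t$ minimizes the smooth function $L_c+L_q$, first-order optimality gives $\nabla L_c(\Theta^t)+\nabla L_q(\Theta^t)=0$, that is, $\nabla L_c(\Theta^t)=-\nabla L_q(\Theta^t)$.

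To see that neither role sits at its own optimum when $\Delta\neq0$, I will argue by contradiction. Suppose $\nabla L_c(\Theta^t)=0$. Then $\nabla L_q(\Theta^t)=0$ as well. Under the quadratic model each loss has a unique stationary point, because its Hessian is positive definite. This would force $\Theta^t=\Theta_c^\ast=\Theta_q^\ast$, contradicting $\Delta\neq0$.

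\textbf{Step 2: closed form for the tied minimizer.} Set $S:=H_c+H_q$, which is positive definite. Solving $H_c(\Theta-\Theta_c^\ast)+H_q(\Theta-\Theta_q^\ast)=0$ gives
\[
\Theta^t=S^{-1}(H_c\Theta_c^\ast+H_q\Theta_q^\ast).
\]
From this the two displacements are
\[
\Theta^t-\Theta_c^\ast=-S^{-1}H_q\Delta,\qquad \Theta^t-\Theta_q^\ast=S^{-1}H_c\Delta.
\]
The untied risk is $L_c(\Theta_c^\ast)+L_q(\Theta_q^\ast)$. Therefore the tied excess equals
\[
\tfrac12\,\Delta^\top\big(H_qS^{-1}H_cS^{-1}H_q+H_cS^{-1}H_qS^{-1}H_c\big)\Delta .
\]

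\textbf{Step 3: collapsing to the stated form.} I expect this to be the only nontrivial step. The key identity is
\[
H_cS^{-1}H_q=H_qS^{-1}H_c=(H_c^{-1}+H_q^{-1})^{-1}=:P,
\]
the "parallel sum" of $H_c$ and $H_q$. I will verify it by inverting: $(H_cS^{-1}H_q)^{-1}=H_q^{-1}SH_c^{-1}=H_c^{-1}+H_q^{-1}$, and the same computation works with the roles of $c$ and $q$ swapped.

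With $P$ in hand, the bracket from Step 2 simplifies as
\[
(H_qS^{-1}H_c)S^{-1}H_q+(H_cS^{-1}H_q)S^{-1}H_c=P\,S^{-1}(H_q+H_c)=P .
\]
This yields exactly $\tfrac12\,\Delta^\top H_c(H_c+H_q)^{-1}H_q\Delta$, as claimed in \eqref{eq:price}.

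\textbf{Step 4: sign and the equality case.} $P$ is the inverse of a sum of two positive-definite matrices, so $P$ is symmetric positive definite. Hence the price is $\ge0$, with equality if and only if $\Delta=0$, i.e. the two role optima coincide.

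The main obstacle is not the algebra but justifying the quadratic reduction for non-quadratic $L_c,L_q$. I would handle it by stating the result as exact under the quadratic model, and as a second-order statement when $\Delta$ is small. In the latter case the cubic Taylor remainders contribute only $O(\|\Delta\|^3)$, while the leading term is quadratic in $\Delta$.
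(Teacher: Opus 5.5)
Your proposal is correct and follows the paper's proof. Both arguments use first-order stationarity for the equal-and-opposite gradients, a quadratic model of each role loss around its minimizer, the closed-form tied minimizer $(H_c+H_q)^{-1}(H_c\Theta_c^\ast+H_q\Theta_q^\ast)$, and the parallel-sum identity $(H_c^{-1}+H_q^{-1})^{-1}=H_c(H_c+H_q)^{-1}H_q$ for positivity. The only differences are that you derive the minimum of a sum of two quadratics explicitly from the displacements, where the paper cites it as a standard identity, and that you state more clearly than the paper that the formula is exact only under the quadratic model.
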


The cost \eqref{eq:price} depends only on how far apart the two roles want the block to be, not on how many parameters it has. When they disagree, sharing forces a curvature-weighted compromise that serves neither. Dual-Stream Decoupling answers both propositions. It conditions the computation on the task from the first layer, so the task identity is present before the compression is complete, and it keeps the task stream on parameter-decoupled weights, so each role reaches its own optimum. This turns the state and prediction separation of~\citep{monea2026state}, established for single-task language modeling, into multi-task case with $K$ task roles against one context role.

\subsection{Relation collapse motivates Task-Interactive Modeling.}

\begin{proposition}[Factorization gap and the escape routes]
\label{prop:relation}
On a fixed encoder, replacing the best factorized read-out by the best joint read-out reduces the risk by exactly $\mathrm{TC}(\mathbf y\mid z)$, and
\begin{equation}
\label{eq:tc-relation}
\mathrm{TC}(\mathbf y\mid z)-\mathrm{TC}(\mathbf y\mid\mathcal A)
=\sum_k I(\mathcal A;y_k\mid z)-I(\mathcal A;\mathbf y\mid z),
\end{equation}
so the realized cost and the encoder-free floor coincide only at joint sufficiency. Attaining $R^\star$ requires a read-out that is not factorized given $\mathcal A$, and there are exactly three ways to obtain one. An explicit joint head over $\prod_k\mathcal I_k$, a latent-variable head scored by its marginal likelihood, or a chain-rule factorization in which each decision conditions on realized predecessor outcomes.
\end{proposition}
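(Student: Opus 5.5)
The plan is to work entirely with conditional entropies, reusing the bookkeeping behind Theorem~\ref{thm:unified}.

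\textbf{First claim.} Fix the encoder $z$. Two optimal risks need to be computed.

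The best factorized read-out $\prod_k q_k(y_k\mid z)$ has optimal risk $\sum_k H(y_k\mid z)$. Each head is unconstrained, so each term is minimized separately by Gibbs' inequality.

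The best joint read-out $q(\mathbf y\mid z)$ has optimal risk $H(\mathbf y\mid z)$, by the same inequality applied on the finite set $\prod_k\mathcal I_k$.

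Their difference is $\sum_k H(y_k\mid z)-H(\mathbf y\mid z)=\mathrm{TC}(\mathbf y\mid z)$ by definition. This is the first claim.

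\textbf{Identity \eqref{eq:tc-relation}.} Expand both total correlations:
\[
\mathrm{TC}(\mathbf y\mid z)-\mathrm{TC}(\mathbf y\mid\mathcal A)
=\sum_k\big[H(y_k\mid z)-H(y_k\mid\mathcal A)\big]-\big[H(\mathbf y\mid z)-H(\mathbf y\mid\mathcal A)\big].
\]
Since $z$ is a deterministic function of $\mathcal A$, we have $H(\cdot\mid\mathcal A)=H(\cdot\mid\mathcal A,z)$. Each bracket is therefore a conditional mutual information $I(\mathcal A;\cdot\mid z)$, which gives \eqref{eq:tc-relation}.

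For the coincidence statement, I will add the earlier source-term identity to $\mathrm{TC}(\mathbf y\mid z)-\mathrm{TC}(\mathbf y\mid\mathcal A)$. This shows that the realized joint risk $H(\mathbf y\mid z)$ equals $R^\star$ exactly when $I(\mathcal A;\mathbf y\mid z)=0$, i.e. under joint sufficiency.

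I read ``coincide'' as comparing the realized gap $R_{\mathrm{TAE}}(z)-R^\star$ with the floor $\Delta_{\mathrm{rel}}$. By Theorem~\ref{thm:unified}, that gap exceeds the floor by $\Delta_{\mathrm{src}}(z)$. This excess vanishes iff every marginal $I(\mathcal A;y_k\mid z)=0$. Marginal sufficiency is implied by joint sufficiency, so I will phrase the equivalence carefully, noting which direction uses the chain rule $I(\mathcal A;\mathbf y\mid z)\ge I(\mathcal A;y_k\mid z)$.

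\textbf{Read-out characterization.} Any read-out attaining $R^\star=H(\mathbf y\mid\mathcal A)$ must equal $p(\mathbf y\mid\mathcal A)$ almost surely, again by Gibbs' inequality.

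If $\mathrm{TC}(\mathbf y\mid\mathcal A)>0$, this law is not a product given $\mathcal A$. No factorized read-out can reach it, because its risk is at least $\sum_k H(y_k\mid\mathcal A)=R^\star+\Delta_{\mathrm{rel}}$.

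\textbf{Main obstacle: ``exactly three ways''.} This is not a theorem in the strict sense, so I will prove a representational trichotomy instead. Any non-product conditional law on a finite product space is realized by one of three forms:
\begin{enumerate}
\item an unrestricted table on $\prod_k\mathcal I_k$, i.e. an explicit joint head;
\item a mixture of products $\sum_c p(c\mid\mathcal A)\prod_k q_k(y_k\mid c,\mathcal A)$, which is always possible with $c$ ranging over the finite joint space, i.e. a latent-variable head;
\item the chain rule $\prod_k p(y_k\mid y_{<k},\mathcal A)$ under some ordering.
\end{enumerate}

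Each form is universal. The content of ``exactly'' is then the converse: a read-out outside all three forms is a product of marginals given $\mathcal A$, and so it inherits the floor $\Delta_{\mathrm{rel}}$. I expect making this exhaustiveness rigorous to be the delicate step. I would state it as a classification of parameterizations, namely whether cross-task coupling is absent, tabulated, marginalized, or sequential, rather than as a claim about all possible models.
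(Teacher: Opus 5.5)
Your derivation of the first claim and of \eqref{eq:tc-relation} is the paper's proof. For the first claim you apply Gibbs' inequality on the finite set $\prod_k\mathcal I_k$ for the joint read-out and head by head for the factorized one. For \eqref{eq:tc-relation} you expand both total correlations and turn each bracket into $I(\mathcal A;\cdot\mid z)$, using that $z$ is a function of $\mathcal A$.

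\textbf{The coincidence clause.} This is where your plan does not match the statement.
\begin{itemize}
\item In the paper, the ``realized cost'' is the factorization gap $\mathrm{TC}(\mathbf y\mid z)$ from the first claim. The ``encoder-free floor'' is $\mathrm{TC}(\mathbf y\mid\mathcal A)=\Delta_{\mathrm{rel}}$. The clause therefore asks when the right side of \eqref{eq:tc-relation} vanishes.
\item Neither of your readings is this. Comparing $H(\mathbf y\mid z)$ with $R^\star$ gives a true equivalence with joint sufficiency, but about different quantities. Comparing $R_{\mathrm{TAE}}(z)-R^\star$ with $\Delta_{\mathrm{rel}}$ just restates Theorem~\ref{thm:unified}, whose exact condition is marginal sufficiency.
\item For the intended pair, only the ``if'' direction holds, and it is all the paper proves (``both sides vanish together when $z$ is jointly sufficient''). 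Joint sufficiency gives $0\le I(\mathcal A;y_k\mid z)\le I(\mathcal A;\mathbf y\mid z)=0$ for every $k$, using exactly the chain-rule inequality you cite.
\item The ``only'' direction is false. Let $\mathcal A$ be a fair bit, $z$ constant, $y_2=\mathcal A$, and $y_1$ an independent fair bit. Then $\mathrm{TC}(\mathbf y\mid z)=\mathrm{TC}(\mathbf y\mid\mathcal A)=0$, while $I(\mathcal A;\mathbf y\mid z)=1$ bit.
\item It also fails under your Theorem~\ref{thm:unified} reading. Take $y_1$ an independent fair bit and $y_2=y_1\oplus\mathcal A$. Then the constant $z$ is marginally but not jointly sufficient.
\end{itemize}
So prove the sufficiency direction and do not attempt necessity.

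\textbf{The last part.} Your Gibbs-uniqueness argument is correct: reaching $R^\star$ forces a non-product law given $\mathcal A$ whenever $\Delta_{\mathrm{rel}}>0$. The paper gives no argument at all for ``exactly three ways.'' Treating it as a taxonomy of parameterizations, as you propose, is the defensible reading, since an unrestricted joint table is already universal on its own.
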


\begin{proposition}[Residual under realized-outcome conditioning]
\label{prop:escape}
Let $\prec$ be a linear extension of the funnel order and let each head condition on the realized labels of $\mathrm{pa}(k)\subseteq\{y_j:j\prec k\}$. Then the optimal excess risk over $R^\star$ equals
\begin{equation}
\label{eq:residual}
\sum_k I\big(y_k;\,y_{\prec k}\setminus\mathrm{pa}(k)\ \big|\ y_{\mathrm{pa}(k)},\mathcal A\big),
\end{equation}
which is invariant to the choice of extension, vanishes when every parent set is the full predecessor set, and equals $\Delta_{\mathrm{rel}}$ when all parent sets are empty.
\end{proposition}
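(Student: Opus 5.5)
The plan is to compute the optimal risk of the realized-outcome-conditioned read-out directly and compare it with the chain-rule expansion of $R^\star$. With $\prec$ fixed, the predicted law is $\prod_k q_k(y_k\mid y_{\mathrm{pa}(k)},\mathcal A)$. Here each head is a conditional law on the finite set $\mathcal I_k$, and the heads have no shared parameters, as in Theorem~\ref{thm:unified}. The joint log loss is then the sum of per-head log losses, so the infimum decouples. Head $k$ is optimal at the true conditional $p(y_k\mid y_{\mathrm{pa}(k)},\mathcal A)$ by Gibbs' inequality, giving optimal risk $\sum_k H(y_k\mid y_{\mathrm{pa}(k)},\mathcal A)$. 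One subtlety needs a remark: for the product to be a valid joint law, $\mathrm{pa}(k)$ must only contain predecessors. This holds by assumption, so the product of conditionals normalizes by summing out tasks in reverse $\prec$ order.

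Next, expand $R^\star=H(\mathbf y\mid\mathcal A)=\sum_k H(y_k\mid y_{\prec k},\mathcal A)$ by the chain rule along the same extension. Subtracting termwise, each summand becomes $H(y_k\mid y_{\mathrm{pa}(k)},\mathcal A)-H(y_k\mid y_{\mathrm{pa}(k)},y_{\prec k}\setminus\mathrm{pa}(k),\mathcal A)$, using $\mathrm{pa}(k)\subseteq\prec k$. This difference is exactly the conditional mutual information in \eqref{eq:residual}, which proves the formula.

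The special cases then follow directly. If $\mathrm{pa}(k)=\{y_j:j\prec k\}$, each conditional mutual information has an empty first argument and vanishes. If all parent sets are empty, the residual is $\sum_k H(y_k\mid\mathcal A)-H(\mathbf y\mid\mathcal A)=\mathrm{TC}(\mathbf y\mid\mathcal A)=\Delta_{\mathrm{rel}}$. This matches the TAE case of Theorem~\ref{thm:unified} with $z=\mathcal A$.

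The main obstacle is invariance to the choice of linear extension, since the individual terms depend on $\prec$. I would show that the total equals a quantity that does not refer to $\prec$: the optimal risk $\sum_k H(y_k\mid y_{\mathrm{pa}(k)},\mathcal A)$ minus $H(\mathbf y\mid\mathcal A)$. The first sum depends only on the parent sets. The second is a joint entropy, and any linear extension yields the same value by the chain rule. Hence \eqref{eq:residual} has the same value for every extension compatible with the given parent sets. The one point to check is that the parent sets remain admissible, that is, predecessors, under every extension considered. This holds because the parents lie within the funnel order and every linear extension respects that order.
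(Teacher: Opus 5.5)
Your proposal is correct and follows the paper's proof essentially step for step. Gibbs' inequality per factor gives the optimum $\sum_k H(y_k\mid y_{\mathrm{pa}(k)},\mathcal A)$, subtracting the chain-rule expansion of $R^\star$ along the same $\prec$ yields the conditional mutual informations, and invariance comes from the $\prec$-free form $\sum_k H(y_k\mid y_{\mathrm{pa}(k)},\mathcal A)-H(\mathbf y\mid\mathcal A)$. One small caution on your last sentence: the statement only guarantees $\mathrm{pa}(k)\subseteq\{y_j:j\prec k\}$, not that parents are funnel predecessors, so the invariance should be asserted over linear extensions of the parent graph, as the paper does (your ``extensions compatible with the given parent sets''), because for two funnel-incomparable tasks a parent under one extension can become a successor under another, and the sum then changes.
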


Task-Interactive Modeling places each decision at the position where its outcome is realized inside a session, and masks the reverse direction. Funnel models such as ESMM~\citep{ma2018entire} and AITM~\citep{xi2021modeling} also condition on predecessors, yet they fix the order and the coupling strength in advance, whereas cross-task attention learns that strength for each instance.

\subsection{Gibbs optimum and the value of \texorpdfstring{$R_{\mathrm{TAE}}$}{RTAE}}

\begin{lemma}[Gibbs inequality~\citep{cover1999elements}]
\label{lem:gibbs}
Let $y$ be a target with finitely possible values and let $u$ collect everything the predictor is allowed to observe. Write $p(\cdot\mid u)$ for the true conditional distribution of $y$ given $u$, and let $q(\cdot\mid u)$ be any predicted distribution that gives positive probability to every value, as a softmax head does. Then the expected log loss of $q$ splits into two parts,
\begin{equation}
\label{eq:gibbs}
\underbrace{\mathbb E\big[-\log q(y\mid u)\big]}_{\text{loss of the prediction}}
=\underbrace{H(y\mid u)}_{\text{uncertainty of the truth}}
+\underbrace{\mathbb E_u\,D_{\mathrm{KL}}\big(p(\cdot\mid u)\,\|\,q(\cdot\mid u)\big)}_{\text{distance from the truth}}
\;\ge\;H(y\mid u).
\end{equation}
The second part is zero exactly when $q(\cdot\mid u)=p(\cdot\mid u)$. The log loss is therefore minimized by predicting the true conditional distribution and by nothing else, which is the statement that it is a \emph{strictly proper} scoring rule~\citep{gneiting2007strictly}.                                       \end{lemma}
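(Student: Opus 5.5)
The plan is to expand the log loss of $q$ directly against the true conditional law, working pointwise in $u$ and then averaging. Because $y$ takes finitely many values and $q(\cdot\mid u)>0$ everywhere, every term $-\log q(y\mid u)$ is finite. Hence all expectations below are finite sums, or integrals of bounded-below quantities, and the manipulations are legitimate.

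\emph{Step 1: split the loss.} By the tower property,
\[
\mathbb E[-\log q(y\mid u)]=\mathbb E_u\Big[\textstyle\sum_y p(y\mid u)\,(-\log q(y\mid u))\Big].
\]
Inside the bracket I add and subtract $\log p(y\mid u)$. This writes the inner sum as
\[
-\textstyle\sum_y p(y\mid u)\log p(y\mid u)\;+\;\textstyle\sum_y p(y\mid u)\log\frac{p(y\mid u)}{q(y\mid u)}.
\]
Terms with $p(y\mid u)=0$ contribute zero under the convention $0\log 0=0$; this is safe because $q>0$. The first sum averages to $H(y\mid u)$ and the second to $\mathbb E_u D_{\mathrm{KL}}(p(\cdot\mid u)\,\|\,q(\cdot\mid u))$, which gives the identity in \eqref{eq:gibbs}.

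\emph{Step 2: nonnegativity of KL.} For fixed $u$, apply Jensen's inequality to the concave $\log$ over the support of $p(\cdot\mid u)$:
\[
-D_{\mathrm{KL}}=\textstyle\sum_{y:p>0} p\log\frac{q}{p}\le \log\textstyle\sum_{y:p>0} q\le \log 1=0.
\]
Equivalently, one can use $\log x\le x-1$. Averaging over $u$ gives the inequality $\ge H(y\mid u)$.

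\emph{Step 3: equality case.} This is the only delicate step. Equality in $\log x\le x-1$ forces $q=p$ on the support of $p$. Equality in the second inequality forces $q$ to put no mass off that support. Together these give $q(\cdot\mid u)=p(\cdot\mid u)$. Since KL is nonnegative, the expected KL vanishes iff the pointwise KL vanishes for almost every $u$. So "exactly when" must be read as $p_u$-almost surely, which is the natural sense for conditional laws. Strict propriety then follows immediately: any minimizer of the loss must have zero KL term, so it equals $p$.
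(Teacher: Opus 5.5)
Your proposal is correct and follows the paper's route: the paper's proof is exactly your Step 1, adding and subtracting $\log p(y\mid u)$ and averaging so that the two pieces become $H(y\mid u)$ and $\mathbb E_u D_{\mathrm{KL}}(p\,\|\,q)$. The paper stops at that identity and takes nonnegativity and the equality case from the citation, so your Steps 2--3 (Jensen, the support argument, and the almost-sure reading of ``exactly when'') supply what it omits. One consequence of your support argument is worth stating: with $q>0$ imposed, the bound is attained only when $p(\cdot\mid u)$ has full support for almost every $u$.
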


\begin{proof}
We give the short derivation for completeness. For one outcome, add and subtract the log probability of the truth,
\[
-\log q(y\mid u)=-\log p(y\mid u)+\log\frac{p(y\mid u)}{q(y\mid u)} .
\]
Now average both sides over $y$ and $u$. The first term averages to $H(y\mid u)$, which is the definition of conditional entropy, and the second averages to $\mathbb E_u D_{\mathrm{KL}}(p\,\|\,q)$, which is the definition of the divergence. This gives the identity in \eqref{eq:gibbs}.
\end{proof}

\begin{lemma}[Optimal risk of a factorized read-out]
\label{lem:rtae}
Suppose each head is free to output any distribution over its own candidate set and no parameter is shared between heads, the $K$ heads can be chosen independently of one another. Then
\begin{equation}
\label{eq:rtae}
R_{\mathrm{TAE}}(z):=\inf_{q_1,\dots,q_K}\ \mathbb E\Big[-\sum_k\log q_k(y_k\mid z)\Big]
=\sum_k H(y_k\mid z),
\end{equation}
reached by letting every head output the true conditional distribution $q_k=p(y_k\mid z)$. Since $-\sum_k\log q_k(y_k\mid z)=-\log\prod_k q_k(y_k\mid z)$, the left side is the joint log loss of the product law $\prod_k q_k(\cdot\mid z)$, which is what makes $R_{\mathrm{TAE}}(z)$ comparable with $R^\star$.
\end{lemma}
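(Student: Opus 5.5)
The plan is to reduce the infimum to $K$ separate problems, one per head, and solve each with Lemma~\ref{lem:gibbs}.

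\textbf{Step 1: split the objective.} By linearity of expectation,
\[
\mathbb E\Big[-\sum_k\log q_k(y_k\mid z)\Big]=\sum_k \mathbb E\big[-\log q_k(y_k\mid z)\big].
\]
No parameter is shared between heads, so the feasible set of tuples $(q_1,\dots,q_K)$ is a Cartesian product. The infimum of a sum of terms, each depending only on its own coordinate, over a product set equals the sum of the individual infima. This is the one place where the no-sharing hypothesis is used. Without it the infimum of the sum could be strictly larger than the sum of the infima.

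\textbf{Step 2: solve each term.} Apply Lemma~\ref{lem:gibbs} with $y=y_k$ and $u=z$. The lemma needs $y_k$ to take finitely many values, which holds because $\mathcal I_k$ is finite. It also needs $z$ to be a legitimate conditioning variable, which holds because $z=\mathrm{Enc}(\mathcal A)$ is measurable, so $p(y_k\mid z)$ exists as a regular conditional law. The lemma then gives
\[
\mathbb E\big[-\log q_k(y_k\mid z)\big]\ \ge\ H(y_k\mid z),
\]
with equality exactly when $q_k=p(\cdot\mid z)$. Summing over $k$ yields $R_{\mathrm{TAE}}(z)=\sum_k H(y_k\mid z)$, and the infimum is attained by $q_k=p(y_k\mid z)$.

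\textbf{Step 3: the main subtlety, attainment.} Lemma~\ref{lem:gibbs} is stated for strictly positive $q$, but the true conditional $p(\cdot\mid z)$ may put zero mass on some labels. I would handle this in one of two ways.
\begin{itemize}
\item Allow $q_k=p(\cdot\mid z)$ directly, using the convention $0\log 0=0$. This is harmless because labels with $p(y\mid z)=0$ occur with probability zero, so they contribute nothing to the expected loss.
\item Alternatively, if heads must be strictly positive, mix with the uniform law: $q^\varepsilon=(1-\varepsilon)p+\varepsilon/|\mathcal I_k|$. The expected loss of $q^\varepsilon$ converges to $H(y_k\mid z)$ as $\varepsilon\to0$, by monotone or dominated convergence on a finite alphabet. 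The infimum is then still $\sum_k H(y_k\mid z)$, attained in the closure.
\end{itemize}

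\textbf{Step 4: the final remark.} This is pure algebra: $-\sum_k\log q_k=-\log\prod_k q_k$. Hence $R_{\mathrm{TAE}}(z)$ is the joint log loss of the product law, which is the same quantity as $R^\star=H(\mathbf y\mid\mathcal A)$ but restricted to factorized predictors. The two are therefore directly comparable.
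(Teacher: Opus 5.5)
Your proposal is correct and takes the same route as the paper. The objective splits into per-head terms over a product feasible set, so the infimum splits, and Lemma~\ref{lem:gibbs} with target $y_k$ and $u=z$ gives $H(y_k\mid z)$ for each head; the paper states this splitting step only implicitly, when it reuses it in Step 1 of the proof of Proposition~\ref{prop:escape}. Your Step 3 on attainment when $p(\cdot\mid z)$ has zeros is extra care the paper omits, since its Lemma~\ref{lem:gibbs} is stated only for strictly positive $q$.
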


\subsection{Proof of Theorem~\ref{thm:unified}}

\begin{proof}
Both $R^\star$ and $R_{\mathrm{TAE}}(z)$ are values of the same joint log loss: Lemma~\ref{lem:gibbs} with target $\mathbf y$ and $u=\mathcal A$ gives $R^\star=H(\mathbf y\mid\mathcal A)$, and Lemma~\ref{lem:rtae} gives $R_{\mathrm{TAE}}(z)=\sum_k H(y_k\mid z)$.

Because $z$ is a deterministic function of $\mathcal A$, conditioning on $(z,\mathcal A)$ is the same as conditioning on $\mathcal A$, so $H(y_k\mid z)=H(y_k\mid\mathcal A)+I(\mathcal A;y_k\mid z)$; all terms are finite. Summing over $k$ and using the definition of total correlation, $\sum_k H(y_k\mid\mathcal A)=H(\mathbf y\mid\mathcal A)+\mathrm{TC}(\mathbf y\mid\mathcal A)$,
\[
R_{\mathrm{TAE}}(z)
=\underbrace{\sum_k I(\mathcal A;y_k\mid z)}_{\Delta_{\mathrm{src}}(z)}
+H(\mathbf y\mid\mathcal A)
+\underbrace{\mathrm{TC}(\mathbf y\mid\mathcal A)}_{\Delta_{\mathrm{rel}}},
\]
which is \eqref{eq:unified} after subtracting $R^\star=H(\mathbf y\mid\mathcal A)$. Both terms are non-negative, $\Delta_{\mathrm{src}}(z)$ as a sum of conditional mutual informations and $\Delta_{\mathrm{rel}}=\mathbb E_{\mathcal A}D_{\mathrm{KL}}\big(P(\mathbf y\mid\mathcal A)\,\|\,\prod_k P(y_k\mid\mathcal A)\big)$ as an expected divergence. Dropping $\Delta_{\mathrm{src}}(z)$ gives the floor
$R_{\mathrm{TAE}}(z)\ge R^\star+\Delta_{\mathrm{rel}}$, with equality exactly when $I(\mathcal A;y_k\mid z)=0$ for all $k$, that is when $z$ is marginally sufficient for every task.
\end{proof}

\subsection{Source term}

\subsubsection{Late conditioning cannot reduce the source term}

\begin{proof}[Proof of Proposition~\ref{prop:late}]
Let $h_k=g_k(z)$ with $g_k$ measurable. We show the inequality term by term.

First, $\sigma(h_k)\subseteq\sigma(z)$, because $h_k$ is a measurable function of
$z$. Conditioning on a coarser $\sigma$-field cannot decrease conditional entropy,
hence
\begin{equation}
\label{eq:coarse}
H(y_k\mid h_k)\ \ge\ H(y_k\mid z).
\end{equation}
Second, $h_k$ is also a measurable function of $\mathcal A$, since $z$ is, so Step 2
of the proof of Theorem~\ref{thm:unified} applies with $h_k$ in place of $z$ and
gives $I(\mathcal A;y_k\mid h_k)=H(y_k\mid h_k)-H(y_k\mid\mathcal A)$. Combining this
with \eqref{eq:coarse} and with Step 3 of that proof,
\[
I(\mathcal A;y_k\mid h_k)
=H(y_k\mid h_k)-H(y_k\mid\mathcal A)
\ \ge\ H(y_k\mid z)-H(y_k\mid\mathcal A)
=I(\mathcal A;y_k\mid z).
\]
Summing over $k$ gives the claim. Equality holds for a given $k$ if and only if
\eqref{eq:coarse} is tight, that is if and only if $h_k$ retains all the information
about $y_k$ that $z$ carries.
\end{proof}

\subsubsection{The price of a shared block}

A weight block $W$ that a Transformer applies at every position acts in two roles. At context positions it prepares keys and values read by later positions, the context role; at task positions it produces the read-out for a decision, the task role. By the chain rule its gradient splits by source,
\begin{equation}
\label{eq:grad-split}
\nabla_W R=\underbrace{\textstyle\sum_{p\in\mathcal C}\nabla_{W_p}R}_{g_c}
+\underbrace{\textstyle\sum_{p\in\mathcal P}\nabla_{W_p}R}_{g_q},
\end{equation}
where $\mathcal C$ and $\mathcal P$ index context and task positions and $W_p$ is the copy of $W$ used at $p$. This is the multi-task form of the state and prediction split of~\citep{monea2026state}, the context role playing the state role and the $K$ decisions playing $K$ prediction roles. Integrating \eqref{eq:grad-split} along the two paths, we model the risk on two formal copies of the block as additive,
\begin{equation}
\label{eq:additive}
R(\Theta_c,\Theta_q)=L_c(\Theta_c)+L_q(\Theta_q),
\end{equation}
with $\Theta_c$ used wherever the block acts in the context role and $\Theta_q$ wherever it acts in the task role. The tied architecture imposes $\Theta_c=\Theta_q$; the decoupled one does not.

\begin{proof}[Proof of Proposition~\ref{prop:source}]
The untied problem separates by \eqref{eq:additive} and attains $L_c(\Theta_c^\ast)+L_q(\Theta_q^\ast)$. The tied problem minimizes $J(\Theta)=L_c(\Theta)+L_q(\Theta)$; its stationarity condition $\nabla L_c(\Theta^t)+\nabla L_q(\Theta^t)=0$ gives
\begin{equation}
\label{eq:stationary}
\nabla L_c(\Theta^t)=-\nabla L_q(\Theta^t),
\end{equation}
so, unless both gradients vanish, which requires a common minimizer, the two roles pull with equal and opposite force and neither sits at its own optimum. This is the first claim.

For the magnitude, model each role loss near its minimizer by $L_r(\Theta)=L_r(\Theta_r^\ast)+\tfrac12(\Theta-\Theta_r^\ast)^\top H_r(\Theta-\Theta_r^\ast)$
for $r\in\{c,q\}$, with $H_c,H_q\succ0$. Minimizing the sum of two quadratics is a standard identity: for $A,B\succ0$,
\begin{equation}
\label{eq:quad-combine}
\min_{\Theta}\ \tfrac12(\Theta-a)^\top A(\Theta-a)+\tfrac12(\Theta-b)^\top B(\Theta-b)
=\tfrac12(a-b)^\top\big(A^{-1}+B^{-1}\big)^{-1}(a-b),
\end{equation}
attained at $\Theta=(A+B)^{-1}(Aa+Bb)$. Applying \eqref{eq:quad-combine} with $a=\Theta_c^\ast$, $b=\Theta_q^\ast$, $A=H_c$, $B=H_q$ gives the location $\Theta^t=(H_c+H_q)^{-1}(H_c\Theta_c^\ast+H_q\Theta_q^\ast)$ and the excess
\[
J(\Theta^t)-\big[L_c(\Theta_c^\ast)+L_q(\Theta_q^\ast)\big]
=\tfrac12\,\Delta^\top\big(H_c^{-1}+H_q^{-1}\big)^{-1}\Delta,
\qquad \Delta=\Theta_c^\ast-\Theta_q^\ast .
\]
The combined matrix $(H_c^{-1}+H_q^{-1})^{-1}=H_c(H_c+H_q)^{-1}H_q$ is symmetric positive definite, so the excess is non-negative and vanishes exactly when $\Delta=0$, which is \eqref{eq:price}.
\end{proof}

\subsection{Relation term}

\begin{proof}[Proof of Proposition~\ref{prop:relation}]

\emph{The gap on a fixed encoder.} By Lemma~\ref{lem:gibbs} with target $\mathbf y$
and $u=z$, the best joint read-out on $z$ attains $H(\mathbf y\mid z)$. By
Lemma~\ref{lem:rtae} the best factorized read-out attains $\sum_k H(y_k\mid z)$. The
difference is $\sum_k H(y_k\mid z)-H(\mathbf y\mid z)=\mathrm{TC}(\mathbf y\mid z)$,
which by the same divergence identity used in Step 5 of Theorem~\ref{thm:unified},
now conditioned on $z$, equals
$\mathbb E_z D_{\mathrm{KL}}\big(P(\mathbf y\mid z)\|\prod_k P(y_k\mid z)\big)\ge0$
and vanishes exactly under conditional independence given $z$.

\emph{The relation between the two total correlations.} Subtract the two definitions
and apply Step 3 of Theorem~\ref{thm:unified} once to each marginal and once to the
joint,
\[
\mathrm{TC}(\mathbf y\mid z)-\mathrm{TC}(\mathbf y\mid\mathcal A)
=\sum_k\big[H(y_k\mid z)-H(y_k\mid\mathcal A)\big]-\big[H(\mathbf y\mid z)-H(\mathbf y\mid\mathcal A)\big]
\]
which is \eqref{eq:tc-relation}. Neither side dominates in general, since the first
sum counts information that several decisions share once per decision while the
second counts it once, and both sides vanish together when $z$ is jointly sufficient.

\end{proof}

For any total order $\prec$ that extends the funnel order, the entropy chain rule
gives $H(\mathbf y\mid\mathcal A)=\sum_k H(y_k\mid y_{\prec k},\mathcal A)$, hence
\begin{equation}
\label{eq:tc-chain}
\mathrm{TC}(\mathbf y\mid\mathcal A)=\sum_k I\big(y_k;y_{\prec k}\mid\mathcal A\big),
\end{equation}
which we read as the total value of conditioning each decision on its predecessors.

\begin{proof}[Proof of Proposition~\ref{prop:escape}]
Assume that the parent assignment $\mathrm{pa}$ induces an acyclic graph and that
$\prec$ is one of its linear extensions, so $\mathrm{pa}(k)\subseteq\{y_j:j\prec k\}$
for every $k$ and $\prod_k q_k(y_k\mid y_{\mathrm{pa}(k)},\mathcal A)$ is a normalized
joint law. Acyclicity is required, since with a cyclic assignment the product is only
a pseudo-likelihood, Lemma~\ref{lem:gibbs} does not apply to it, and the resulting
excess can be negative.

\emph{Step 1, the optimum of the parent-conditioned family.} Apply
Lemma~\ref{lem:gibbs} to the $k$-th factor with target $y_k$ and conditioning
variable $u=(y_{\mathrm{pa}(k)},\mathcal A)$. As in Lemma~\ref{lem:rtae} the objective
is a sum whose $k$-th summand depends on $q_k$ only, so the infimum splits and equals
$\sum_k H(y_k\mid y_{\mathrm{pa}(k)},\mathcal A)$.

\emph{Step 2, subtracting the baseline term by term.} Write
$R^\star=\sum_k H(y_k\mid y_{\prec k},\mathcal A)$, which is the chain rule along $\prec$. Subtracting term by term,
\[
\sum_k\Big[H(y_k\mid y_{\mathrm{pa}(k)},\mathcal A)-H(y_k\mid y_{\prec k},\mathcal A)\Big]
=\sum_k I\big(y_k;\,y_{\prec k}\setminus\mathrm{pa}(k)\ \big|\ y_{\mathrm{pa}(k)},\mathcal A\big),
\]
where each equality uses $\mathrm{pa}(k)\subseteq\{y_j:j\prec k\}$, so that passing from the conditioning set $(y_{\mathrm{pa}(k)},\mathcal A)$ to the larger set $(y_{\prec k},\mathcal A)$ removes exactly the variables $y_{\prec k}\setminus\mathrm{pa}(k)$ and the entropy difference is by definition the conditional mutual information with those variables. Every term is non-negative, so
the total is non-negative.

\emph{Step 3, invariance to the extension.} The total equals
$\sum_k H(y_k\mid y_{\mathrm{pa}(k)},\mathcal A)-H(\mathbf y\mid\mathcal A)$. The first sum depends only on the fixed parent assignment and not on $\prec$, and the second is order free, so the total is the same for every linear extension of the same graph, even though the individual terms in Step 2 do depend on the extension.

\emph{Step 4, the two boundary cases.} Taking $\mathrm{pa}(k)=\{y_j:j\prec k\}$ makes $y_{\prec k}\setminus\mathrm{pa}(k)=\varnothing$, so every term is zero and the total vanishes. Taking every $\mathrm{pa}(k)=\varnothing$ turns the total into $\sum_k I(y_k;y_{\prec k}\mid\mathcal A)$, which is $\mathrm{TC}(\mathbf y\mid\mathcal A)$ by \eqref{eq:tc-chain}.
\end{proof}

\section{Computation Complexity}
\label{app:cc}

\begin{table}[h]
\centering
\caption{Measured cost with the task head fixed to STAR ($L{=}3$, $d{=}96$, batch size $1$). Best in \textbf{bold}.}
\label{tab:cost}
\setlength{\tabcolsep}{3pt}
\begin{tabular}{l rr}
\toprule
Encoder & Params & FLOPs \\
\midrule
IntTravel          & 1.54M & 448.8M \\
OneTrans           & 3.43M & 432.3M \\
HGenPush           & 1.71M & 561.6M \\
\textbf{\modelname (Ours)} & 1.70M & \textbf{297.9M} \\
\bottomrule
\end{tabular}
\end{table}

Let $N_s$ be the number of sessions, $N_c$ the number of context tokens, $N_q=K N_s$ the number of task tokens ($K$ tasks per session), $d$ the width, $L$ the depth.

A unified single-stream encoder that places task tokens in the same sequence attends over $N_c+N_q$ positions, costing $\mathcal O\!\big(L(N_c+N_q)^2 d\big)$ for attention. \modelname{}
instead performs three attentions per layer, i.e., context self-attention, task$\to$context cross-attention, and task self-attention. For a total of:
\begin{equation}
\mathcal O\!\big(L(N_c^2+N_qN_c+N_q^2)\,d\big),
\end{equation}
which is \emph{strictly cheaper} by $(N_c+N_q)^2$. Finally, read-out differs asymptotically: a factorized head on a unified stream is evaluated at all $N_c+N_q$ positions, whereas in \modelname each task head consumes only its own $N_s$ task tokens, reducing head cost from $\mathcal O(KN d_{\text{head}})$ to $\mathcal O(KN_s d_{\text{head}})$. Table~\ref{tab:cost} reports measured cost with the head fixed to STAR. We count dense parameters of the encoder, read-out gate and head (excluding sparse embedding tables) and operator-level FLOPs.

%% file: iclr2025_conference.bib
@article{zhai2024actions,
  title={Actions speak louder than words: Trillion-parameter sequential transducers for generative recommendations},
  author={Zhai, Jiaqi and Liao, Lucy and Liu, Xing and Wang, Yueming and Li, Rui and Cao, Xuan and Gao, Leon and Gong, Zhaojie and Gu, Fangda and He, Michael and others},
  journal={arXiv preprint arXiv:2402.17152},
  year={2024}
}

@inproceedings{tang2020progressive,
  title={Progressive layered extraction (ple): A novel multi-task learning (mtl) model for personalized recommendations},
  author={Tang, Hongyan and Liu, Junning and Zhao, Ming and Gong, Xudong},
  booktitle={Proceedings of the 14th ACM conference on recommender systems},
  pages={269--278},
  year={2020}
}

@article{yan2026inttravel,
  title={Inttravel: A real-world dataset and generative framework for integrated multi-task travel recommendation},
  author={Yan, Huimin and Xu, Longfei and Sun, Junjie and Liu, Zheng and Luo, Wei and Liu, Kaikui and Chu, Xiangxiang},
  journal={arXiv preprint arXiv:2602.11664},
  year={2026}
}

@inproceedings{zhang2026onetrans,
  title={Onetrans: Unified feature interaction and sequence modeling with one transformer in industrial recommender},
  author={Zhang, Zhaoqi and Pei, Haolei and Guo, Jun and Wang, Tianyu and Feng, Yufei and Sun, Hui and Liu, Shaowei and Sun, Aixin},
  booktitle={Proceedings of the ACM Web Conference 2026},
  pages={8162--8170},
  year={2026}
}

@inproceedings{sheng2021one,
  title={One model to serve all: Star topology adaptive recommender for multi-domain ctr prediction},
  author={Sheng, Xiang-Rong and Zhao, Liqin and Zhou, Guorui and Ding, Xinyao and Dai, Binding and Luo, Qiang and Yang, Siran and Lv, Jingshan and Zhang, Chi and Deng, Hongbo and others},
  booktitle={Proceedings of the 30th ACM International Conference on Information \& Knowledge Management},
  pages={4104--4113},
  year={2021}
}

@inproceedings{wang2025home,
  title={Home: Hierarchy of multi-gate experts for multi-task learning at kuaishou},
  author={Wang, Xu and Cao, Jiangxia and Fu, Zhiyi and Gai, Kun and Zhou, Guorui},
  booktitle={Proceedings of the 31st ACM SIGKDD Conference on Knowledge Discovery and Data Mining V. 1},
  pages={2638--2647},
  year={2025}
}

@article{monea2026state,
  title={The State-Prediction Separation Hypothesis},
  author={Monea, Giovanni and Godey, Nathan and Brantley, Kiant{\'e} and Artzi, Yoav},
  journal={arXiv preprint arXiv:2607.01218},
  year={2026}
}

@article{caruana1997multitask,
  title={Multitask learning},
  author={Caruana, Rich},
  journal={Machine learning},
  volume={28},
  number={1},
  pages={41--75},
  year={1997},
  publisher={Springer}
}

@inproceedings{ma2018modeling,
  title={Modeling task relationships in multi-task learning with multi-gate mixture-of-experts},
  author={Ma, Jiaqi and Zhao, Zhe and Yi, Xinyang and Chen, Jilin and Hong, Lichan and Chi, Ed H},
  booktitle={Proceedings of the 24th ACM SIGKDD international conference on knowledge discovery \& data mining},
  pages={1930--1939},
  year={2018}
}

@inproceedings{ma2018entire,
  title={Entire space multi-task model: An effective approach for estimating post-click conversion rate},
  author={Ma, Xiao and Zhao, Liqin and Huang, Guan and Wang, Zhi and Hu, Zelin and Zhu, Xiaoqiang and Gai, Kun},
  booktitle={The 41st International ACM SIGIR Conference on Research \& Development in Information Retrieval},
  pages={1137--1140},
  year={2018}
}

@article{wang2023multi,
  title={Multi-task deep recommender systems: A survey},
  author={Wang, Yuhao and Lam, Ha Tsz and Wong, Yi and Liu, Ziru and Zhao, Xiangyu and Wang, Yichao and Chen, Bo and Guo, Huifeng and Tang, Ruiming},
  journal={arXiv preprint arXiv:2302.03525},
  year={2023}
}

@article{hou2026survey,
  title={A survey on generative recommendation: Data, model, and tasks},
  author={Hou, Min and Wu, Le and Liao, Yuxin and Yang, Yonghui and Zhang, Zhen and Wang, Yu and Zheng, Changlong and Wu, Han and Hong, Richang},
  journal={AI Open},
  year={2026},
  publisher={Elsevier}
}

@article{zhang2025advances,
  title={Advances and challenges of multi-task learning method in recommender systems: A survey},
  author={Zhang, Mingzhu and Yin, Ruiping and Yang, Zhen and Wang, Yipeng},
  journal={Neurocomputing},
  pages={132510},
  year={2025},
  publisher={Elsevier}
}

@inproceedings{xi2021modeling,
  title={Modeling the sequential dependence among audience multi-step conversions with multi-task learning in targeted display advertising},
  author={Xi, Dongbo and Chen, Zhen and Yan, Peng and Zhang, Yinger and Zhu, Yongchun and Zhuang, Fuzhen and Chen, Yu},
  booktitle={Proceedings of the 27th ACM SIGKDD Conference on Knowledge Discovery \& Data Mining},
  pages={3745--3755},
  year={2021}
}

@article{deng2025onerec,
  title={Onerec: Unifying retrieve and rank with generative recommender and iterative preference alignment},
  author={Deng, Jiaxin and Wang, Shiyao and Cai, Kuo and Ren, Lejian and Hu, Qigen and Ding, Weifeng and Luo, Qiang and Zhou, Guorui},
  journal={arXiv preprint arXiv:2502.18965},
  year={2025}
}

@inproceedings{rajput2023recommender,
  title={Recommender systems with generative retrieval},
  author={Rajput, Shashank and Mehta, Nikhil and Singh, Anima and Keshavan, Raghunandan Hulikal and Vu, Trung and Heldt, Lukasz and Hong, Lichan and Tay, Yi and Tran, Vinh Q and Samost, Jonah and others},
  booktitle={Thirty-seventh Conference on Neural Information Processing Systems},
  year={2023}
}

@inproceedings{shi2024unisar,
  title={UniSAR: Modeling User Transition Behaviors between Search and Recommendation},
  author={Shi, Teng and Si, Zihua and Xu, Jun and Zhang, Xiao and Zang, Xiaoxue and Zheng, Kai and Leng, Dewei and Niu, Yanan and Song, Yang},
  booktitle={Proceedings of the 47th International ACM SIGIR Conference on Research and Development in Information Retrieval},
  pages={1029--1039},
  year={2024}
}

@article{yan2025intsr,
  title={Intsr: An integrated generative framework for search and recommendation},
  author={Yan, Huimin and Xu, Longfei and Sun, Junjie and Ou, Ni and Luo, Wei and Tan, Xing and Cheng, Ran and Liu, Kaikui and Chu, Xiangxiang},
  journal={arXiv preprint arXiv:2509.21179},
  year={2025}
}

@article{tang2026onerank,
  title={OneRank: Unified Transformer-Native Ranking Architecture for Multi-Task Recommendation},
  author={Tang, Jiakai and Dai, Sunhao and Wang, Kun and Guo, Zhiluohan and Zhao, Yu and Fu, Cong and Wu, Kangle and Ni, Yabo and Zeng, Anxiang and Chen, Xu and others},
  journal={arXiv preprint arXiv:2606.16838},
  year={2026}
}

@article{liang2026hgenpush,
  title={HGenPush: A Heterogeneous Generative Recommendation Architecture for Industrial Push Notification Systems},
  author={Liang, Xiao and Feng, Jiali and Feng, Xin and Wang, Yiqing and Ye, Baolin and Feng, Siyao and Deng, Zhihui and Zhang, Cunyi and Sun, Huajin and Li, Xuanping and others},
  journal={arXiv preprint arXiv:2607.03362},
  year={2026}
}

@article{yu2020gradient,
  title={Gradient surgery for multi-task learning},
  author={Yu, Tianhe and Kumar, Saurabh and Gupta, Abhishek and Levine, Sergey and Hausman, Karol and Finn, Chelsea},
  journal={Advances in neural information processing systems},
  volume={33},
  pages={5824--5836},
  year={2020}
}

@inproceedings{vafidis2025disentangling,
  title={Disentangling representations through multi-task learning},
  author={Vafidis, Pantelis and Bhargava, Aman and Rangel, Antonio},
  booktitle={International Conference on Learning Representations},
  volume={2025},
  pages={68296--68338},
  year={2025}
}

@article{bengio2013representation,
  title={Representation learning: A review and new perspectives},
  author={Bengio, Yoshua and Courville, Aaron and Vincent, Pascal},
  journal={IEEE transactions on pattern analysis and machine intelligence},
  volume={35},
  number={8},
  pages={1798--1828},
  year={2013},
  publisher={IEEE}
}

@article{sun2020adashare,
  title={Adashare: Learning what to share for efficient deep multi-task learning},
  author={Sun, Ximeng and Panda, Rameswar and Feris, Rogerio and Saenko, Kate},
  journal={Advances in Neural Information Processing Systems},
  volume={33},
  pages={8728--8740},
  year={2020}
}

@book{cover1999elements,
  title={Elements of information theory},
  author={Cover, Thomas M},
  year={1999},
  publisher={John Wiley \& Sons}
}

@article{gneiting2007strictly,
  title={Strictly proper scoring rules, prediction, and estimation},
  author={Gneiting, Tilmann and Raftery, Adrian E},
  journal={Journal of the American statistical Association},
  volume={102},
  number={477},
  pages={359--378},
  year={2007},
  publisher={Taylor \& Francis}
}

@article{watanabe1960information,
  title={Information theoretical analysis of multivariate correlation},
  author={Watanabe, Satosi},
  journal={IBM Journal of research and development},
  volume={4},
  number={1},
  pages={66--82},
  year={1960},
  publisher={IBM}
}

@article{ver2014discovering,
  title={Discovering structure in high-dimensional data through correlation explanation},
  author={Ver Steeg, Greg and Galstyan, Aram},
  journal={Advances in neural information processing systems},
  volume={27},
  year={2014}
}

@article{oord2018representation,
  title={Representation learning with contrastive predictive coding},
  author={Oord, Aaron van den and Li, Yazhe and Vinyals, Oriol},
  journal={arXiv preprint arXiv:1807.03748},
  year={2018}
}

@inproceedings{cheng2011exploring,
  title={Exploring millions of footprints in location sharing services},
  author={Cheng, Zhiyuan and Caverlee, James and Lee, Kyumin and Sui, Daniel},
  booktitle={Proceedings of the International AAAI Conference on Web and Social Media},
  pages={81--88},
  year={2011}
}

@inproceedings{cho2011friendship,
  title={Friendship and mobility: user movement in location-based social networks},
  author={Cho, Eunjoon and Myers, Seth A and Leskovec, Jure},
  booktitle={Proceedings of the 17th ACM SIGKDD international conference on Knowledge discovery and data mining},
  pages={1082--1090},
  year={2011}
}

@article{yang2014modeling,
  title={Modeling user activity preference by leveraging user spatial temporal characteristics in LBSNs},
  author={Yang, Dingqi and Zhang, Daqing and Zheng, Vincent W and Yu, Zhiyong},
  journal={IEEE Transactions on Systems, Man, and Cybernetics: Systems},
  volume={45},
  number={1},
  pages={129--142},
  year={2014},
  publisher={IEEE}
}

@article{yang2016participatory,
  title={Participatory cultural mapping based on collective behavior data in location-based social networks},
  author={Yang, Dingqi and Zhang, Daqing and Qu, Bingqing},
  journal={ACM Transactions on Intelligent Systems and Technology (TIST)},
  volume={7},
  number={3},
  pages={1--23},
  year={2016},
  publisher={ACM New York, NY, USA}
}

@article{monti2018semantic,
  title={Semantic trails of city explorations: How do we live a city},
  author={Monti, Diego and Palumbo, Enrico and Rizzo, Giuseppe and Troncy, Rapha{\"e}l and Ehrhart, Thibault and Morisio, Maurizio},
  journal={arXiv preprint arXiv:1812.04367},
  year={2018}
}

@inproceedings{yang2019revisiting,
  title={Revisiting user mobility and social relationships in lbsns: a hypergraph embedding approach},
  author={Yang, Dingqi and Qu, Bingqing and Yang, Jie and Cudre-Mauroux, Philippe},
  booktitle={The world wide web conference},
  pages={2147--2157},
  year={2019}
}

@article{yu2024dsfnet,
  title={Dsfnet: Learning disentangled scenario factorization for multi-scenario route ranking},
  author={Yu, Jiahao and Duan, Yihai and Xu, Longfei and Chen, Chao and Liu, Shuliang and Liu, Kaikui and Yang, Fan and Chu, Xiangxiang and Guo, Ning},
  journal={arXiv preprint arXiv:2404.00243},
  year={2024}
}

@inproceedings{misra2016cross,
  title={Cross-stitch networks for multi-task learning},
  author={Misra, Ishan and Shrivastava, Abhinav and Gupta, Abhinav and Hebert, Martial},
  booktitle={Proceedings of the IEEE conference on computer vision and pattern recognition},
  pages={3994--4003},
  year={2016}
}

@inproceedings{zhang2022leaving,
  title={Leaving no one behind: A multi-scenario multi-task meta learning approach for advertiser modeling},
  author={Zhang, Qianqian and Liao, Xinru and Liu, Quan and Xu, Jian and Zheng, Bo},
  booktitle={Proceedings of the Fifteenth ACM International Conference on Web Search and Data Mining},
  pages={1368--1376},
  year={2022}
}

@article{yan2022apg,
  title={Apg: Adaptive parameter generation network for click-through rate prediction},
  author={Yan, Bencheng and Wang, Pengjie and Zhang, Kai and Li, Feng and Deng, Hongbo and Xu, Jian and Zheng, Bo},
  journal={Advances in Neural Information Processing Systems},
  volume={35},
  pages={24740--24752},
  year={2022}
}

@inproceedings{su2024stem,
  title={STEM: unleashing the power of embeddings for multi-task recommendation},
  author={Su, Liangcai and Pan, Junwei and Wang, Ximei and Xiao, Xi and Quan, Shijie and Chen, Xihua and Jiang, Jie},
  booktitle={Proceedings of the AAAI conference on artificial intelligence},
  pages={9002--9010},
  year={2024}
}

@inproceedings{shi2026mining,
  title={Mining Informative Interests via Latent Cross Reasoning for Search Enhanced Recommendation},
  author={Shi, Teng and Qin, Weicong and Yu, Weijie and Zhang, Xiao and He, Ming and Fan, Jianping and Xu, Jun},
  booktitle={Proceedings of the 49th International ACM SIGIR Conference on Research and Development in Information Retrieval},
  pages={1612--1622},
  year={2026}
}

@inproceedings{chen2018gradnorm,
  title={Gradnorm: Gradient normalization for adaptive loss balancing in deep multitask networks},
  author={Chen, Zhao and Badrinarayanan, Vijay and Lee, Chen-Yu and Rabinovich, Andrew},
  booktitle={International conference on machine learning},
  pages={794--803},
  year={2018},
  organization={PMLR}
}

@article{liu2021conflict,
  title={Conflict-averse gradient descent for multi-task learning},
  author={Liu, Bo and Liu, Xingchao and Jin, Xiaojie and Stone, Peter and Liu, Qiang},
  journal={Advances in neural information processing systems},
  volume={34},
  pages={18878--18890},
  year={2021}
}

@article{navon2022multi,
  title={Multi-task learning as a bargaining game},
  author={Navon, Aviv and Shamsian, Aviv and Achituve, Idan and Maron, Haggai and Kawaguchi, Kenji and Chechik, Gal and Fetaya, Ethan},
  journal={arXiv preprint arXiv:2202.01017},
  year={2022}
}
